\documentclass[final,1p,times,authoryear]{elsarticle}

\usepackage{booktabs}
\usepackage{hyperref}

\makeatletter
\def\ps@pprintTitle{%
 \let\@oddhead\@empty
 \let\@evenhead\@empty
 \def\@oddfoot{}%
 \let\@evenfoot\@oddfoot}
\makeatother

\usepackage[utf8]{inputenc}
\usepackage{geometry}
\usepackage{amsmath, amssymb, amsthm}
\usepackage{graphicx}
\usepackage{natbib}
\usepackage{enumitem}
\usepackage{lineno}
\usepackage{subcaption}

\hypersetup{
    colorlinks=true,
    linkcolor=blue,
    filecolor=magenta,
    urlcolor=cyan,
    citecolor=blue,
}

\graphicspath{ {./graphics/} }

\newtheorem{proposition}{Proposition}
\newtheorem{lemma}{Lemma}
\newtheorem{corollary}{Corollary}

\newtheorem{assumption}{Assumption}

\DeclareMathOperator*{\argmin}{arg\,min}
\DeclareMathOperator{\Var}{Var}

\newcommand{\R}{\mathbb{R}}
\newcommand{\E}{\mathbb{E}}
\newcommand{\D}{\mathcal{D}}
\newcommand{\K}{\mathcal{K}}
\newcommand{\Lcal}{\mathcal{L}}
\newcommand{\Hcal}{\mathcal{H}}

\newcommand{\Jcal}{\mathcal{J}}

\begin{document}
\begin{frontmatter}

\title{Generalized Local Polynomial Regression with Decomposed Context-Aware Kernels}
\author{Yaniv Shulman}
\address{yaniv@shulman.info}
\date{\today}

\begin{abstract}
Local Polynomial Regression (LPR) is a powerful tool for nonparametric smoothing, yet it traditionally suffers from a "Euclidean tautology": the variables used to define the local neighborhood are identical to those used in the polynomial fit. This restricts its ability to handle complex domains where the regression function varies across non-Euclidean structures, such as graphs, manifolds, or discrete categories, while remaining locally smooth in the primary feature space. We propose \emph{Generalized Context-Aware LPR} (GC-LPR), a framework that decouples the fitting coordinates ($Z$) from the weighting context ($C$). By adopting a modeling convention where the conditional mean depends jointly on $Z$ and $C$ ($Y = m_C(Z) + \varepsilon$), our estimator acts as a "projected smoother": it isolates a slice of the data on the manifold defined by $C$ via a compound product kernel, and performs polynomial fitting in the $Z$-coordinates within that slice. This enables practitioners to model responses that vary across graphs, networks, or categorical strata while retaining the interpretability and bias properties of LPR in a primary Euclidean feature space. Theoretical analysis clarifies the induced context-smoothed target of GC-LPR and shows that the method preserves the Euclidean bias-reduction properties of standard LPR while allowing arbitrary, non-Euclidean contexts to modulate the local estimation. We demonstrate the efficacy of this approach on geospatial and network-structured datasets.
\end{abstract}
\end{frontmatter}

\section{Introduction}
\label{s:Introduction}

Local Polynomial Regression (LPR) is a cornerstone of nonparametric smoothing, prized for its ability to adapt locally without committing to a rigid global functional form. Its theoretical virtues are well-established, especially for local linear and higher-order variants: the local linear estimator is design-adaptive and automatically corrects first-order boundary bias, while the broader local-polynomial family attains the usual minimax-optimal rates under standard smoothness conditions \citep{fan1996local, loader1999local}. However, classical LPR suffers from a restrictive "Euclidean tautology": the variables used to define the local neighborhood are identical to those used in the polynomial fit.

This constraint is increasingly inadequate in modern domains where the notion of proximity is multi-faceted. Observations may be ``near'' geographically, connected via a transport network, or similar in terms of categorical attributes, even if they are distant in the feature space of interest. Furthermore, relying on a monolithic kernel over all available features exacerbates the ``curse of dimensionality'' \citep{stone1982optimal}, degrading performance as the dimension of the predictor space grows. Generalized product kernels have been successfully applied to mixed-data regression \citep{li2004nonparametric}, and GWR variants have already explored non-Euclidean notions of locality by replacing Euclidean distance with road-network distance, travel time, metro-network distance, or network-derived weighting schemes \citep{lu2011non_euclidean_gwr,lu2014non_euclidean_gwr,gao2022network_distance_gwr,he2023network_weight_gwr}. 

Our contribution is therefore more specific. We introduce \emph{Generalized Context-Aware LPR} (GC-LPR), a decomposed local-polynomial framework in which the polynomial fit is performed in primary Euclidean coordinates $Z$, while one or more auxiliary contexts $C_1,\ldots,C_J$ enter through a compound product kernel $K_c$ (formal definition in Section~\ref{s:Generalized Context-Aware LPR}). The Euclidean factor governs local polynomial fitting in $Z$, and the context factors restrict the neighborhood using domain-appropriate similarity notions, ranging from discrete labels to nodes on a graph. This yields an induced context-smoothed target $m_W$ and a bias-variance interpretation that separates polynomial approximation in $Z$ from smoothing across $C$. Accordingly, the asymptotic theory in Section~\ref{sec:theory} is stated for $m_W$ under fixed context bandwidths rather than for the point target $m_{c^\star}(z)$.

Conceptually, we do not treat context as affecting only the variance. Instead, we model the regression function itself as context-dependent, writing $m_C(Z)$. The context kernels then isolate a relevant slice of the data defined by $C$, within which the local polynomial fit is carried out in $Z$. This allows the relationship between $Z$ and $Y$ to change across the context space, whether smoothly or through structural discontinuities such as network or geographic boundaries. In this sense, GC-LPR is distinct from global graph-regularization methods such as Graph Signal Processing (GSP) \citep{shuman2013emerging}.

We further demonstrate the composability of this framework by integrating a response-based robustness factor $K_r$, following recent work by \citet{shulman2025robust}, to handle outliers without distorting the definition of locality. The remainder of this paper reviews the relevant literature, formalizes the estimator, analyzes its asymptotic properties, and validates its performance on geospatial and network-structured datasets.

\section{Background and Notation}
\label{s:Background}
This section establishes the formal notation and background for the methods discussed. We first review the canonical framework for Local Polynomial Regression (LPR) to define its objective and properties. Finally, we summarize the RSKLPR framework, which introduces a response-based kernel for robustness.

\subsection{Local Polynomial Regression}
\label{s:background_lpr}

Let $\D_{n}=\{(Z_i,Y_i)\}_{i=1}^{n}$ be i.i.d.\ with $Z_i\in\R^d$ and $Y_i\in\R$. Fix $z\in\R^d$ and write $t_i:=Z_i-z$. Let $\mathcal{A}_p=\{\alpha\in\mathbb{N}_0^d:\ |\alpha|\le p\}$, $t^\alpha=\prod_{j=1}^{d} t_j^{\alpha_j}$, and
\[
r_p(t):=(t^\alpha)_{\alpha\in\mathcal{A}_p}\in\R^{M},\qquad M=\binom{p+d}{d}.
\]
The local polynomial is $g_z(u;\beta)=r_p(u-z)^\top\beta(z)$, and the estimator of $m(z)$ is $\hat m(z):=\hat\beta_0(z)$, the intercept.

Let $K:\R^d\to[0,\infty)$ be a kernel and $H$ a positive-definite bandwidth matrix. With $K_H(t):=|H|^{-1}K(H^{-1}t)$, where $|H|:=\det(H)$ and in the isotropic case $H=hI_d$ gives $K_H(t)=h^{-d}K(t/h)$. 
In empirical objectives we use the unnormalized weight $K(H^{-1}t)$, since $|H|$ is constant in $i$, so it does not affect the minimizer. The empirical LPR objective is:
\begin{equation}
\label{eq:lpr_loss}
\Lcal_{lpr}(z;\D_n,H)=\sum_{i=1}^{n}\big(Y_i-r_p(t_i)^\top\beta(z)\big)^2 \, K(H^{-1}t_i).
\end{equation}
Let $R$ be the $n\times M$ design with $i$th row $r_p(t_i)^\top$, $W_z:=\mathrm{diag}\{K(H^{-1}t_1),\ldots,K(H^{-1}t_n)\}$, and $Y:=(Y_1,\ldots,Y_n)^\top$. When $R^\top W_z R$ is nonsingular,
\[
\hat\beta(z)=(R^\top W_z R)^{-1}R^\top W_z Y,\qquad \hat m(z)=e_1^\top\hat\beta(z).
\]
The population analogue (with the normalized kernel) is
\begin{equation}
\label{eq:lpr_pop}
\Jcal_{std}(z;\beta)=\iint\big(v-g_z(u;\beta)\big)^2 K_H(u-z)\, f_{ZY}(u,v)\, dv\,du.
\end{equation}
Under standard conditions (continuity of $f_Z$ with $f_Z(z)>0$; $m\in C^{p+1}$ near $z$; symmetric integrable $K$ with unit mass; $H\to0$ and $n|H|\to\infty$ in the isotropic case), the local linear estimator ($p=1$) is design-adaptive and, for interior $z$,
\[
\mathrm{Bias}\{\hat m(z)\}=O(\|H\|^{2}),\qquad \mathrm{Var}\{\hat m(z)\}=O\big((n|H|)^{-1}\big).
\]
See \citep{fan1993local,fan1996local,loader1999local} for details. Product and radial kernels both fit within this framework.

\subsection{Robust Local Polynomial Regression with Similarity Kernels}
\label{s:background_rsklpr}
Robust Local Polynomial Regression with Similarity Kernels (RSKLPR) \citep{shulman2025robust} augments LPR with a similarity kernel on the full data space to achieve robustness, in the spirit of classical robust local regression and local polynomial M-estimation \citep{cleveland1979robust,jiang2001robust}. Writing its empirical loss at $(z,y)$ as
\begin{equation}
\label{eq:rsk_loss}
\Lcal_{rsk}(z,y;\D_n,\Hcal)
=\sum_{i=1}^{n}\Big(Y_i-r_p(t_i)^\top\beta(z,y)\Big)^2\,
\K_D\big((z,y),(Z_i,Y_i);\Hcal\big),
\end{equation}

with the decomposition
\begin{equation}
\K_D\big((z,y),(u,v);\Hcal_1,\Hcal_2\big)
=K_1(z-u;\Hcal_1)\,K_2\big((z,y),(u,v);\Hcal_2\big),
\end{equation}
a separable choice $K_2((z,y),(u,v))=\hat f_{Y\mid Z}(y\mid z)\hat f_{Y\mid Z}(v\mid u)$ yields,
after dropping the multiplicative constant $\hat f_{Y \mid Z}(y \mid z)$ (independent of $i$),
a simplified empirical loss independent of $y$ at the query:
\begin{equation}
\label{eq:rsk_simple_loss}
\tilde{\Lcal}(z)
=\sum_{i=1}^{n}\Big(Y_i-r_p(t_i)^\top\beta(z)\Big)^2\,K_1(t_i;\Hcal_1)\,\hat f_{Y \mid Z}(Y_i \mid Z_i;\Hcal_2).
\end{equation}
In the population, the weight appears squared because it is multiplied by the conditional density itself as a weight inside the expectation:
\begin{equation}
\label{eq:rsk_pop}
\Jcal_{rsk}(z;\beta)
=\iint \big(v-g_z(u;\beta)\big)^2\,K_1(z-u;\Hcal_1)\,[f_{Y \mid Z}(v \mid u)]^2\,f_Z(u)\,dv\,du.
\end{equation}
The intercept targets $\mu'(z)=\frac{\int v[f_{Y \mid Z}(v \mid z)]^2\,dv}{\int [f_{Y \mid Z}(v \mid z)]^2\,dv}$, which reduces to $m(z)$ when $f_{Y \mid Z}(\cdot \mid z)$ is symmetric about its mean; see \citep{shulman2025robust}. \newline

\section{Generalized Context-Aware LPR}
\label{s:Generalized Context-Aware LPR}

\subsection{Modeling Convention: The Context-Dependent Mean}
\label{s:gc_convention}

Let $X \in \mathcal{X}$ denote a predictor observation where we emphasize that $\mathcal{X}$ is a general predictor space: an observation $X$ may include Euclidean features, categorical labels, spatial coordinates, or identifiers of nodes in a graph or network. We assume there exists a Euclidean coordinate chart $\psi_0:\mathcal{X}\to\R^d$ and additional feature maps yielding the decomposition $X \mapsto (Z,C)$, where $Z=\psi_0(X)\in \R^d$ are the fitting coordinates and $C$ collects additional \emph{contexts} (categorical, geospatial, network-valued, etc.). We do not treat $C$ merely as variance modifiers but rather adopt the conditional-mean formulation:
\[
m_c(z) \;\equiv\; \mathbb{E}\!\left[Y \mid Z = z,\, C = c\right].
\]
Accordingly, we write the observation model as
\[
Y = m_C(Z) + \varepsilon,
\qquad
\mathbb{E}[\varepsilon \mid Z,C] = 0, \quad \mathrm{Var}(\varepsilon \mid Z,C) = \sigma^2(Z,C).
\]
The underlying point target of interest at \( x^\star = (z, c^\star) \) is the context-specific regression function:
\[
m_{c^\star}(z).
\]

Here, the regression function $m_{c^\star}(z)$ is assumed to be smooth (locally polynomial) in $Z$, but its dependence on $C$ may be complex, discontinuous, or defined on a graph. The role of the context kernels is to isolate a local "slice" or neighborhood in the $C$-space, within which the polynomial approximation in $Z$ is valid. 

\subsection{Estimator}
We decouple the Euclidean fitting chart from the weighting context by defining feature maps over the general predictor space $\mathcal{X}$. We define the \emph{compound predictor kernel} as a product of individual kernels on mapped features:
\begin{equation}
\label{eq:kc_def}
K_c(x,u;\Hcal_c)
= K_{0,H}\!\big(\psi_0(x)-\psi_0(u)\big)\,
\prod_{j=1}^{J} K_j\!\big(\psi_j(x),\psi_j(u);\eta_j\big),
\qquad
\Hcal_c=(H,\eta_1,\ldots,\eta_J).
\end{equation}

Here, $\psi_0: \mathcal{X} \to \R^d$ denotes the Euclidean coordinate chart used for local polynomial fitting, and we write $Z=\psi_0(X)$ for the corresponding chart coordinates, equipped with the standard kernel $K_{0,H}(\cdot)=|H|^{-1}K_0(H^{-1}\cdot)$. For $j\ge1$, the maps $\psi_j: \mathcal{X} \to \mathcal{C}_j$ extract auxiliary context features $C_j \in \mathcal{C}_j$. The context kernels $K_j(\cdot;\eta_j)$ are symmetric on their respective spaces $\mathcal{C}_j$ and operate on the feature maps $\psi_j(X)$; they may incorporate non-Euclidean structure (e.g., Haversine distance for coordinates, graph shortest-path or diffusion distances \citep{shuman2013emerging}, or Aitchison--Aitken kernels for discrete labels \citep{aitchison1976multivariate}). This recovers classical product-kernel mixed-data smoothing while generalizing to non-Euclidean topologies.

With these components, the GC-LPR estimator $\hat\beta(x)$ minimizes the context-weighted least squares objective:
\begin{equation}
\label{eq:gc_lpr_loss}
\Lcal_{\text{gc-lpr}}(x)
=\sum_{i=1}^{n}\Big(Y_i-r_p(\psi_0(X_i)-\psi_0(x))^\top\beta(x)\Big)^2\,K_c(x,X_i;\Hcal_c).
\end{equation}
Equivalently, writing \(z=\psi_0(x)\) and \(Z_i=\psi_0(X_i)\), define the empirical local Gram matrix
\[
G_n(x)\coloneqq \sum_{i=1}^{n} K_c(x,X_i;\Hcal_c)\, r_p(Z_i-z)\,r_p(Z_i-z)^\top.
\]
When \(G_n(x)\) is nonsingular, the minimizer is unique and satisfies
\[
\hat\beta(x)=G_n(x)^{-1}\sum_{i=1}^{n} K_c(x,X_i;\Hcal_c)\, r_p(Z_i-z)\,Y_i.
\]
In computation one may replace \(G_n(x)^{-1}\) by a Moore--Penrose inverse if needed; for the pointwise asymptotic results below, we assume \(G_n(x^\star)\) is nonsingular with probability tending to one.
Crucially, the polynomial basis $r_p(\cdot)$ operates only on $Z_i=\psi_0(X_i)$, whereas the weight factorizes as
\[
K_c(x,X_i;\Hcal_c)=K_{0,H}(Z_i-z)\,W(x,X_i),
\qquad
W(x,u)\coloneqq \prod_{j=1}^{J} K_j\!\big(\psi_j(x),\psi_j(u);\eta_j\big),
\]
where $z=\psi_0(x)$ and $W$ involves only the maps $\{\psi_j\}_{j \ge 1}$. Note that while the polynomial basis $r_p$ varies only with $Z$, the estimated coefficients $\beta(x)$ vary with the full context $C$, allowing the shape of the regression function to change across contexts.

Specifically, given finite context bandwidths, the estimator naturally targets the \emph{context-smoothed mean}:
\[
m_W(z;x^\star) = \frac{\mathbb{E}[Y \, W(x^\star, X) \mid Z=z]}{\mathbb{E}[W(x^\star, X) \mid Z=z]},
\]
where $x^\star=(z,c^\star)$ with $z=\psi_0(x^\star)$. While analytically distinct from the pointwise target $m_{c^\star}(z)$, in domains such as graph signal processing this smoothing is often a desired feature rather than a bias term. It allows the model to ``borrow strength'' from topologically proximate observations, effectively denoising the structural dependency on $C$ while fitting the geometry of $Z$.

Hence, at $x^\star$, GC\text{-}LPR yields a local polynomial approximation to $m_W(z;x^\star)$; this coincides with $m_{c^\star}(z)$ under exact context selection or vanishing context bandwidth.

\paragraph{Example (geospatial and categorical contexts)}
Consider a raw observation $X$ consisting of tabular features $Z_{\text{feat}}$, geospatial coordinates $Z_{\text{geo}}$, a region label $C_{\text{cat}}$, and a network node $C_{\text{net}}$. We define the feature maps as:
\[
\psi_0(X)=Z_{\text{feat}} \oplus Z_{\text{geo}},\quad
\psi_1(X)=Z_{\text{geo}},\quad
\psi_2(X)=C_{\text{cat}},\quad
\psi_3(X)=C_{\text{net}},
\]
where $\oplus$ denotes concatenation. We then define the compound kernel $K_c$ explicitly in terms of these inputs:
\[
K_c(x,u;\Hcal_c) =
K_{0,H}\big(\psi_0(x)-\psi_0(u)\big) \cdot
K_{\text{geo}}\big(\psi_1(x),\psi_1(u);\eta_1\big)\cdot
K_{\text{cat}}\big(\psi_2(x),\psi_2(u);\eta_2\big)\cdot
K_{\text{net}}\big(\psi_3(x),\psi_3(u);\eta_3\big).
\]
This formulation allows the model to handle ``manifold folding'' (e.g., cul-de-sacs) where points are close in the Euclidean coordinates induced by $\psi_0$ but distant in the network context $\psi_3$. Specifically, $K_{\text{net}}$ will assign low weight to physically close but network-distant points, correctly estimating the mean for the specific network context. Note that $Z_{\text{geo}}$ appears in both the fitting chart $\psi_0$ and the context map $\psi_1$; GC-LPR allows variables to be shared between the regressor $r_p$ and the weighting kernel $K_c$. When overlaps are heavy, one may temper factors via exponents $K_j^{\alpha_j}$ with $\alpha_j\in(0,1]$, or tune the $(H,\eta_j)$ by cross-validation.

\subsection{A Robust Extension (GRC-LPR)}

Optional robustness against $Y$-outliers is incorporated by a separable response-based factor $K_r$, following \citet{shulman2025robust}. To control the curse of dimensionality in the density estimation, we define a robustness feature map $\psi_r: \mathcal{X} \to \R^{d_r}$ (typically $\psi_r(X)=\psi_0(X)=Z$) to select the variables relevant for detecting outliers. Define
\begin{equation}
\label{eq:grc_decomp}
\K_{\D}^*\big((x,y),(u,v);\Hcal\big)\;:=\;K_c(x,u;\Hcal_c)\;K_r\big((\psi_r(x),y),(\psi_r(u),v);\Hcal_r\big),
\end{equation}
and consider the empirical loss at $x$ obtained when $K_r$ is chosen as a separable conditional-density weight in the $\psi_r$-chart, i.e.,
$K_r\big((\psi_r(x),y),(\psi_r(u),v)\big)=\hat f_{Y\mid \psi_r(X)}(y\mid \psi_r(x);\Hcal_r)\,\hat f_{Y\mid \psi_r(X)}(v\mid \psi_r(u);\Hcal_r)$.
Dropping the multiplicative constant $\hat f_{Y\mid \psi_r(X)}(y\mid \psi_r(x);\Hcal_r)$ (independent of $i$) yields:
\begin{equation}
\label{eq:grc_simple_loss}
\tilde{\Lcal}_{\text{grc-lpr}}(x)
=\sum_{i=1}^{n}\Big(Y_i-r_p(\psi_0(X_i)-\psi_0(x))^\top\beta(x)\Big)^2\,K_c(x,X_i;\Hcal_c)\,\hat f_{Y\mid \psi_r(X)}(Y_i \mid \psi_r(X_i);\Hcal_r).
\end{equation}
This formulation yields a single-step reweighted LPR where points with low conditional density, relative to the chosen subspace $\psi_r(X)$, receive smaller weights. While one may choose $\psi_r(X)=X$, restricting $\psi_r$ to the fitting coordinates $Z$ or a low-dimensional subset may be preferred to ensure stable density estimates.

\section{Related Work}
\label{s:Related Work}

The proposed GC-LPR framework sits at the intersection of nonparametric smoothing, mixed-data regression, and signal processing on irregular domains. This section contextualizes our contribution within these established fields.

\subsection{Generalized Product Kernels and Mixed Data}
The foundational challenge in nonparametric regression is the ``curse of dimensionality.'' \citet{li2004nonparametric} revolutionized the treatment of mixed data by introducing generalized product kernels. Their estimator, $\hat{E}(Y|X^c, X^d)$, utilizes standard kernels for continuous variables ($X^c$) and variation-based kernels (e.g., Aitchison-Aitken) for discrete variables ($X^d$). A critical property of this framework is ``automatic relevance determination'': cross-validation can effectively shrink the bandwidths of irrelevant predictors to infinity, smoothing them out of the model \citep{hall2007nonparametric}.

However, the Racine-Li framework assumes that all variables entering the kernel also enter the regression function. GC-LPR relaxes this by decoupling the sets. We allow context variables $C$ to enter the kernel $K_c$ without forcing them into the polynomial basis $r_p(Z)$. This is broadly in line with work on associated kernels for mixed or support-constrained regressors \citep{some2016effects}, which emphasizes matching the kernel support to the support of each regressor. In GC-LPR, however, the context kernels serve as auxiliary locality weights rather than as additional regressors entering the polynomial mean directly.

\subsection{Varying Coefficient Models and GWR}
Varying Coefficient Models (VCM) extend the linear model by allowing coefficients to change smoothly as a function of an index variable, typically written as $Y = X^\top \beta(U) + \epsilon$ \citep{hastie1993varying, fan1999statistical}. \citet{he2009double} further refined VCM estimation using double-smoothing bias reduction techniques.

Geographically Weighted Regression (GWR) \citep{brunsdon1996geographically, fotheringham2003geographically} is perhaps the most famous instance of a spatial VCM. In classical GWR and related extensions such as Multiscale GWR (MGWR), one local regression is indexed by location and locality is defined through a spatial weighting scheme \citep{fotheringham2017multiscale}. Prior GWR work has already broadened this notion of locality beyond Euclidean distance, including road-network distance, travel time, metro-network distance, and network-derived weight matrices \citep{lu2011non_euclidean_gwr,lu2014non_euclidean_gwr,gao2022network_distance_gwr,he2023network_weight_gwr}. Those contributions primarily modify the distance metric or network weight matrix inside otherwise standard GWR machinery. A more theoretical GWR literature also studies estimation and inference, location-specific bandwidths, tests for locational heterogeneity and model specification, and local-linear bias reduction \citep{paez2002gwr_part1,paez2002gwr_part2,wang2008local_linear_gwr}. 

GC-LPR is related to VCM and GWR, but it is not a direct re-labeling of the context $C$ as the VCM index. Standard VCMs and GWR typically use a single indexing space to define the local fit. GC-LPR instead separates fitting coordinates $Z$ from auxiliary contexts $C_1,\ldots,C_J$, and allows several context kernels to enter multiplicatively while the polynomial basis remains in $Z$. In that sense, the framework is closer to a decomposed, context-indexed local polynomial model than to a standard spatial VCM. Our novelty is therefore not the use of non-Euclidean locality itself, but the decomposed multi-context local-polynomial formulation and the target-level theory built around $m_W$.

\subsection{Graph Signal Processing and Manifold Regularization}
When data resides on irregular domains, Graph Signal Processing (GSP) and manifold regularization address smoothness via operators such as the graph Laplacian \citep{shuman2013emerging}. Diffusion kernels (or heat kernels) $K_t = e^{-t\mathcal{L}}$ generalize Gaussian smoothing to manifolds and graphs, capturing the intrinsic geometry of the data domain \citep{kondor2002diffusion, belkin2003laplacian, coifman2006diffusion}. These methods are conceptually distinct from local regression: they are typically global graph-smoothing or spectral procedures rather than weighted local polynomial fits.

Work by \citet{bickel2007local} extended local polynomial regression to unknown manifolds, proving that LPR can adapt to the intrinsic dimension of the data. GC-LPR is complementary to these spectral methods. By incorporating graph-based kernels (e.g., diffusion or shortest-path) as multiplicative factors in $K_c$, we combine local-polynomial fitting in $Z$ with graph-informed context weighting. This allows us to estimate the context-dependent mean $m_C(Z)$ even when $C$ represents a complex manifold structure that is not isometric to Euclidean space.

In summary, GC-LPR can be viewed as a decomposed local-polynomial framework related to mixed-data smoothing, GWR, and graph-based learning, but distinct in its separation of fitting coordinates from multiple auxiliary contexts and in its target-level interpretation via $m_W$.

\section{Theoretical Properties}
\label{sec:theory}

We study the estimator in \eqref{eq:gc_lpr_loss} at a fixed query point \(x^\star\in\mathcal{X}\).
Write \(z=\psi_0(x^\star)\in\mathbb{R}^d\), and let \(c^\star\) denote the associated context value of \(x^\star\) under the decomposition \(X\mapsto (Z,C)\). Assume that \(z\) is an interior point of the support of \(Z\). When discussing population quantities,
we use the regular conditional distribution of \((X,Y)\) given \(Z=\psi_0(X)\), and write
\(\mathbb{E}[\cdot\mid Z=u]\) for conditional expectations.
Throughout this section we assume i.i.d.\ sampling of \(\{(X_i,Y_i)\}_{i=1}^n\). Consequently, the asymptotic results below do not cover graph-dependent or graph-temporal dependence structures such as those used in Experiments~3 and~4; those experiments should therefore be read as empirical demonstrations outside the scope of the present theory.
Throughout this section, the context bandwidths \(\eta_1,\ldots,\eta_J\) are treated as fixed while the Euclidean bandwidth matrix \(H\) shrinks. Accordingly, the estimand of the asymptotic analysis is the induced context-smoothed target \(m_W(\cdot;x^\star)\); consistency for the point target \(m_{c^\star}(z)\) would require a separate theorem with \(\eta_j\to0\) and compatible joint rate conditions.

For clarity, recall the context weight and its conditional mean
\[
W(x,u)\coloneqq \prod_{j=1}^J K_j\!\big(\psi_j(x),\,\psi_j(u);\eta_j\big),
\qquad
\gamma_x(u)\coloneqq \mathbb{E}\!\left[\,W(x,X)\mid Z=u\,\right],
\]
so that the compound kernel factorizes into Euclidean and context components:
\[
K_c(x,u)\;=\;K_{0,H}\!\big(\psi_0(x)-\psi_0(u)\big)\,W(x,u).
\]
Whenever \(\gamma_x(u)>0\) and the displayed conditional expectations are finite, define the associated context-smoothed conditional mean
\[
m_W(u;x)\coloneqq
\frac{\mathbb{E}[Y\,W(x,X)\mid Z=u]}{\mathbb{E}[W(x,X)\mid Z=u]}
=\frac{\mathbb{E}[Y\,W(x,X)\mid Z=u]}{\gamma_x(u)}.
\]

\begin{lemma}[Context-smoothed conditional target]
\label{lem:mw_target}
Fix \(x\in\mathcal{X}\) and \(u\in\mathbb{R}^d\) such that \(0<\gamma_x(u)<\infty\) and
\[
\mathbb{E}\!\left[Y^2\,W(x,X)\mid Z=u\right]<\infty.
\]
Then
\[
m_W(u;x)
=
\argmin_{\alpha\in\mathbb{R}}
\mathbb{E}\!\left[(Y-\alpha)^2\,W(x,X)\mid Z=u\right].
\]
Equivalently, for every measurable function \(a:\mathbb{R}^d\to\mathbb{R}\),
\[
\mathbb{E}\!\left[(Y-a(Z))^2\,W(x,X)\mid Z=u\right]
=
\mathbb{E}\!\left[(Y-m_W(u;x))^2\,W(x,X)\mid Z=u\right]
+ \gamma_x(u)\,\big(a(u)-m_W(u;x)\big)^2.
\]
In particular, for each fixed \(u\), \(m_W(u;x)\) is the unique conditional weighted \(L^2\)-projection onto constants; equivalently, \(m_W(\cdot;x)\) is the pointwise conditional weighted \(L^2\)-projection target induced by the context weights.
\end{lemma}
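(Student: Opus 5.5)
The plan is to establish the displayed decomposition first and read off the argmin characterization from it. Fix $x$ and $u$ with $0<\gamma_x(u)<\infty$ and $\mathbb{E}[Y^2W(x,X)\mid Z=u]<\infty$. Throughout we work under the regular conditional distribution given $Z=u$, so every quantity of the form $a(Z)$ is the deterministic constant $a(u)$. Abbreviate $W=W(x,X)\ge 0$ and $\mu=m_W(u;x)$. The first step is integrability. Cauchy--Schwarz applied to $|Y|\sqrt{W}\cdot\sqrt{W}$ gives
\[
\mathbb{E}[|Y|W\mid Z=u]^2\le \mathbb{E}[Y^2W\mid Z=u]\,\gamma_x(u)<\infty ,
\]
so $\mu$ is well defined and finite. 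The same bound shows that all cross terms below are finite.

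Next I expand around $\mu$. Pointwise,
\[
(Y-a(u))^2W=(Y-\mu)^2W+2(\mu-a(u))(Y-\mu)W+(\mu-a(u))^2W .
\]
Each of the three terms has finite conditional expectation by the previous step, so linearity of conditional expectation applies. The middle term vanishes because
\[
\mathbb{E}[(Y-\mu)W\mid Z=u]=\mathbb{E}[YW\mid Z=u]-\mu\gamma_x(u)=0,
\]
which is exactly the defining identity of $m_W$. The last term equals $\gamma_x(u)(a(u)-\mu)^2$. Together these give the displayed Pythagorean identity, where $a(Z)$ is replaced by $a(u)$ under conditioning on $Z=u$ and no measurability issue arises.

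Taking $a\equiv\alpha$ constant, the objective equals a finite constant plus $\gamma_x(u)(\alpha-\mu)^2$. Since $\gamma_x(u)>0$, this is strictly minimized at $\alpha=\mu$ and nowhere else, which gives both the argmin statement and uniqueness. The projection interpretation is a restatement of this: the map $\alpha\mapsto\mathbb{E}[(Y-\alpha)^2W\mid Z=u]$ is the squared distance from $Y$ to the constant $\alpha$ in the weighted $L^2$ space of the conditional law with weight $W$, and orthogonality of the residual $Y-\mu$ to constants is precisely the vanishing cross term.

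The only delicate step is the justification of the conditional-expectation manipulations. One must check that the finiteness assumptions suffice for every term, which the Cauchy--Schwarz bound handles. One must also ensure that the identity is interpreted for the fixed $u$ via the regular conditional distribution rather than only almost everywhere. The paper's convention of using the regular conditional distribution of $(X,Y)$ given $Z$ settles the latter point.
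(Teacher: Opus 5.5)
Your proposal is correct and follows the paper's proof essentially step for step. You expand $(Y-\alpha)^2 W$ around $m_W(u;x)$, cancel the cross term using $\mathbb{E}[YW\mid Z=u]=m_W(u;x)\,\gamma_x(u)$, get uniqueness from $\gamma_x(u)>0$, and handle the function version via $a(Z)=a(u)$ under the conditional law. The only differences are cosmetic: you prove the function version first and then specialize to constants, and you make the finiteness of $\mathbb{E}[|Y|W\mid Z=u]$ explicit via Cauchy--Schwarz, whereas the paper simply asserts finiteness from the hypotheses.
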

Proof in~\ref{app:proof_mw_target}. Under the modeling convention \(m_c(z)=\mathbb{E}[Y\mid Z=z,\,C=c]\), our estimand at
\(x^\star\) is the context-specific mean \(m_{c^\star}(z)\). However, with finite context bandwidths,
the induced population target is the context-smoothed mean \(m_W(z;x^\star)\); the difference
\(m_W(z;x^\star)-m_{c^\star}(z)\) captures the effect of smoothing (or borrowing strength) across contexts.

\subsection{Assumptions}
\label{s:assumptions}

\begin{assumption}[Data, design, and local smoothness]
\label{ass:data}
Let \(\{(X_i,Y_i)\}_{i=1}^n\) be i.i.d.\ copies of \((X,Y)\). Let \(Z=\psi_0(X)\in\mathbb{R}^d\) admit a density \(f_Z\) that is continuous in a neighborhood of \(z\),
with \(f_Z(z)>0\). Assume the regression model
\[
Y=m_C(Z)+\varepsilon,\qquad \mathbb{E}[\varepsilon\mid Z,C]=0,\qquad \Var(\varepsilon\mid Z,C)=\sigma^2(Z,C).
\]
Fix \(x^\star\), and assume the context-smoothed mean \(m_W(\cdot;x^\star)\in C^{p+1}\) in a neighborhood of \(z\).
In particular, if the context kernels induce no leakage away from \(c^\star\) so that \(m_W(\cdot;x^\star)=m_{c^\star}(\cdot)\)
locally, this reduces to \(m_{c^\star}(\cdot)\in C^{p+1}\) near \(z\).
\end{assumption}

\begin{assumption}[Euclidean kernel and bandwidth]
\label{ass:k0}
\(K_0:\mathbb{R}^d\to[0,\infty)\) is even, integrable with finite moments up to order \(p+2\), and
\(\int K_0(t)\,dt=1\). The bandwidth \(H\) is positive definite with \(\|H\|\to0\) and \(n|H|\to\infty\).
Moreover, the kernel moment matrices
\[
M_0\coloneqq \int r_p(t)\,r_p(t)^\top K_0(t)\,dt,
\qquad
\Omega_0\coloneqq \int r_p(t)\,r_p(t)^\top K_0(t)^2\,dt
\]
have finite entries, and \(M_0\) is nonsingular.
\end{assumption}

\begin{assumption}[Context factors and effective-weight regularity]
\label{ass:context}
For \(j\ge 1\), the context factor \(K_j(\cdot,\cdot;\eta_j)\) is symmetric and nonnegative, and is
uniformly bounded above: \(0\le K_j(\cdot,\cdot;\eta_j)\le \bar K_j<\infty\). (Equivalently, one may replace each
factor by \(K_j/\bar K_j\in[0,1]\), which only rescales the weighted least-squares objective by a constant and does
not change its minimizer.)

Fix \(x^\star\in\mathcal{X}\). Assume the conditional mean weight
\[
\gamma_{x^\star}(u)=\mathbb{E}[W(x^\star,X)\mid Z=u]
\]
is locally Lipschitz at \(z\); i.e., there exist constants \(L<\infty\) and \(\delta>0\) such that
\[
|\gamma_{x^\star}(u)-\gamma_{x^\star}(z)| \le L\|u-z\| \quad \text{whenever } \|u-z\|<\delta.
\]
This regularity requirement is imposed on the effective conditional mean weight \(\gamma_{x^\star}\) itself, regardless of whether the underlying context variables are continuous, discrete, or mixed. Locally constant cases are included as a special case.
\end{assumption}

\begin{assumption}[Positivity]
\label{ass:positivity}
Let \(\phi_1(x^\star)\coloneqq \mathbb{E}[W(x^\star,X)\mid Z=z]=\gamma_{x^\star}(z)\). Assume
\(0<\underline{\phi}\le \phi_1(x^\star)\).
\end{assumption}
Combined with Assumptions~\ref{ass:data} and \ref{ass:context}, this pointwise positivity implies that, after possibly shrinking the neighborhood of \(z\), both \(\gamma_{x^\star}(u)\) and the effective design density \(q_{x^\star}(u)\coloneqq f_Z(u)\gamma_{x^\star}(u)\) are bounded away from zero near \(z\).

\begin{assumption}[Effective residual second-moment regularity]
\label{ass:variance_moment}
Fix \(x^\star\in\mathcal{X}\), and define
\[
\nu_{x^\star}(u)\coloneqq
\mathbb{E}\!\left[\,W(x^\star,X)^2\,\big(Y-m_W(u;x^\star)\big)^2\mid Z=u\,\right].
\]
Assume \(\nu_{x^\star}\) is locally bounded in a neighborhood of \(z\) and continuous at \(z\). Equivalently, on the event \(\{Z=u\}\), one may write
\[
\nu_{x^\star}(u)=
\mathbb{E}\!\left[\,W(x^\star,X)^2\,\big(Y-m_W(Z;x^\star)\big)^2\mid Z=u\,\right].
\]
\end{assumption}

\subsection{Bias and Variance}

\begin{lemma}[Effective population LPR problem]
\label{lem:weighted_target}
Fix a query point \(x\in\mathcal{X}\) and write \(z=\psi_0(x)\). The population objective
\[
\Jcal_{\mathrm{gc-lpr}}(x;\beta)
:= \mathbb{E}\Big[(Y - r_p(Z-\psi_0(x))^\top\beta(x))^2\,K_{0,H}(Z-\psi_0(x))\,W(x,X)\Big]
\]
is, up to an additive constant independent of \(\beta\), equal to
\[
\tilde{\Jcal}_{\mathrm{gc-lpr}}(x;\beta)
=
\int \Big(m_W(u;x)-r_p(u-z)^\top\beta(x)\Big)^2\,K_{0,H}(u-z)\,\gamma_x(u)\,f_Z(u)\,du.
\]
In particular, GC-LPR is (population-wise) standard local polynomial regression for the target \(m_W(\cdot;x)\)
under the effective design density \(u\mapsto f_Z(u)\,\gamma_x(u)\).
\end{lemma}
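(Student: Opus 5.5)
The plan is to condition on \(Z\) and then apply the pointwise decomposition of Lemma~\ref{lem:mw_target}. Write \(\mathcal{J}_{\mathrm{gc-lpr}}(x;\beta)=\mathbb{E}\big[\mathbb{E}[\,\cdot\mid Z]\big]\) via the tower property. Since \(K_{0,H}(Z-z)\) and the polynomial \(a_\beta(Z):=r_p(Z-z)^\top\beta(x)\) are \(\sigma(Z)\)-measurable, the inner conditional expectation at \(Z=u\) becomes
\[
K_{0,H}(u-z)\,\mathbb{E}\!\left[(Y-a_\beta(Z))^2\,W(x,X)\mid Z=u\right].
\]
The function \(a_\beta\) is measurable in \(Z\). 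Wherever \(0<\gamma_x(u)<\infty\) and \(\mathbb{E}[Y^2W(x,X)\mid Z=u]<\infty\), Lemma~\ref{lem:mw_target} splits this conditional expectation into
\[
\mathbb{E}\!\left[(Y-m_W(u;x))^2W(x,X)\mid Z=u\right]+\gamma_x(u)\big(a_\beta(u)-m_W(u;x)\big)^2 .
\]
Multiplying by \(K_{0,H}(u-z)f_Z(u)\) and integrating in \(u\) then gives
\[
\mathcal{J}_{\mathrm{gc-lpr}}(x;\beta)=\text{const}+\tilde{\mathcal{J}}_{\mathrm{gc-lpr}}(x;\beta).
\]
Here the constant \(\int K_{0,H}(u-z)\,\mathbb{E}[(Y-m_W(u;x))^2W\mid Z=u]\,f_Z(u)\,du\) does not involve \(\beta\). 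Reading off the integrand \(K_{0,H}(u-z)\,\gamma_x(u)f_Z(u)\) identifies this as standard LPR toward \(m_W(\cdot;x)\) under the design density \(f_Z\gamma_x\). This is the same structure as the population objective~\eqref{eq:lpr_pop}.

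A direct alternative avoids citing the lemma. Expand \(Y-a_\beta=(Y-m_W)+(m_W-a_\beta)\). The cross term conditional on \(Z=u\) equals \(2(m_W(u)-a_\beta(u))\big(\mathbb{E}[YW\mid Z=u]-m_W(u)\gamma_x(u)\big)\), and this vanishes by the definition of \(m_W\). I would mention this as the core identity behind the decomposition.

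The main obstacle is bookkeeping of where the pieces are defined and finite. At points \(u\) with \(\gamma_x(u)=0\), the nonnegativity of \(W\) forces \(W(x,X)=0\) a.s. conditionally on \(Z=u\). The integrand is then zero regardless of how \(m_W\) is defined there, so \(m_W\) can be set arbitrarily (e.g.\ to \(0\)) on that set.

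The additive constant must also be finite, so that subtracting it is legitimate. I would assume \(\mathbb{E}[Y^2W(x,X)K_{0,H}(Z-z)]<\infty\), which is implied locally by boundedness of \(W\) (Assumption~\ref{ass:context}) and a finite second moment of \(Y\). The polynomial term is finite since \(K_0\) has moments up to order \(p+2\) (Assumption~\ref{ass:k0}), at least when \(m_W\) and \(f_Z\gamma_x\) are locally bounded. The identity is then stated as equality of extended-real functions of \(\beta\) with the same minimizers. Finally, I would note that the regular conditional distribution makes the tower-property step rigorous.
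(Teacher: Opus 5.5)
Your proposal is correct and is essentially the paper's proof: condition on \(Z\), apply Lemma~\ref{lem:mw_target} pointwise with \(a(Z)=r_p(Z-z)^\top\beta(x)\), multiply by \(K_{0,H}(u-z)f_Z(u)\), integrate, and treat the \(\beta\)-free residual term as the additive constant. Your additional bookkeeping goes beyond the paper, which omits these points: you handle the set where \(\gamma_x(u)=0\), secure finiteness of the constant via \(\mathbb{E}[Y^2W(x,X)K_{0,H}(Z-z)]<\infty\), and state the identity for extended-real objectives with common minimizers. The one small slip is that finiteness of the polynomial term needs kernel moments of order \(2p\), which come from the finiteness of \(M_0\) rather than the order-\(p+2\) condition, but this does not matter under your extended-real formulation.
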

\begin{proof}
Apply Lemma~\ref{lem:mw_target} with \(a(Z)=r_p(Z-\psi_0(x))^\top\beta(x)\), multiply by \(K_{0,H}(Z-\psi_0(x))\), and take expectations. The additive term
\[
\mathbb{E}\Big[(Y-m_W(Z;x))^2\,K_{0,H}(Z-\psi_0(x))\,W(x,X)\Big]
\]
does not depend on \(\beta\), and the remaining term reduces by iterated expectation to
\[
\int \Big(m_W(u;x)-r_p(u-z)^\top\beta(x)\Big)^2\,K_{0,H}(u-z)\,\gamma_x(u)\,f_Z(u)\,du.
\]
\end{proof}

\begin{proposition}[Bias decomposition and rate]
\label{prop:bias}
Under Assumptions \ref{ass:data}--\ref{ass:positivity}, fix \(x^\star\in\mathcal{X}\). The GC-LPR intercept estimator
\(\hat m(x^\star)\coloneqq e_1^\top \hat\beta(x^\star)\), where \(e_1=(1,0,\dots,0)\) selects the intercept term,
has induced population target \(m_W(z;x^\star)\). Moreover, its bias relative to the
context-specific mean \(m_{c^\star}(z)\) decomposes as
\[
\mathbb{E}\big[\hat m(x^\star)\big]-m_{c^\star}(z)
=
\underbrace{\Big(\mathbb{E}\big[\hat m(x^\star)\big]-m_W(z;x^\star)\Big)}_{\text{polynomial bias (in \(Z\))}}
\;+\;
\underbrace{\Big(m_W(z;x^\star)-m_{c^\star}(z)\Big)}_{\text{Smoothing Bias (from \(C\))}}.
\]
The first term is the standard local-polynomial approximation error in \(Z\), and satisfies
\[
\mathbb{E}\big[\hat m(x^\star)\big]-m_W(z;x^\star)
=O\!\big(\|H\|^{p+1}\big).
\]
Hence,
\[
\mathbb{E}\big[\hat m(x^\star)\big]-m_{c^\star}(z)
=
O\!\big(\|H\|^{p+1}\big)\;+\;\Big(m_W(z;x^\star)-m_{c^\star}(z)\Big).
\]
In particular, if the context kernels enforce exact selection of the context cell containing \(c^\star\)
(e.g., \(W(x^\star,X)=0\) outside that cell, so that \(m_W(z;x^\star)=m_{c^\star}(z)\)), then the smoothing bias
term vanishes and the \(O(\|H\|^{p+1})\) rate is preserved. In the continuous-context case, sharpening the
context kernels (smaller \(\eta\)) can reduce \(\big|m_W(z;x^\star)-m_{c^\star}(z)\big|\), typically at the
cost of increased variance through a smaller effective sample size.
The displayed \(O(\|H\|^{p+1})\) rate is a generic nonsharp umbrella upper bound inherited from the effective local polynomial problem; sharper interior rates depend on the usual parity refinements and are not pursued here.
\end{proposition}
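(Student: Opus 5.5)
The decomposition identity is trivial: add and subtract \(m_W(z;x^\star)\). So the work is to show that \(\hat m(x^\star)\) targets \(m_W(z;x^\star)\) and that its bias is \(O(\|H\|^{p+1})\).

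Since \(\mathbb{E}[\hat m]\) may fail to exist when \(G_n\) is near singular, I would read the bias in the usual way. It is either the conditional bias given \(\{X_i\}\), evaluated on the event that \(G_n(x^\star)\) is nonsingular (which has probability tending to one), or the bias of the leading-order term \(e_1^\top\beta^\ast_H\). Here \(\beta^\ast_H\) minimizes the population objective \(\Jcal_{\mathrm{gc-lpr}}\).

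\textbf{Step 1: reduce to standard LPR.} By Lemma~\ref{lem:weighted_target}, \(\Jcal_{\mathrm{gc-lpr}}\) equals, up to a \(\beta\)-free constant, a standard LPR objective with regression function \(m_W(\cdot;x^\star)\) and effective design density \(q(u)=f_Z(u)\gamma_{x^\star}(u)\). For the empirical version, I would condition on the \(X_i\). Then
\[
\mathbb{E}[\hat\beta\mid X_{1:n}]=G_n^{-1}\sum_i K_c(x^\star,X_i)\,r_p(Z_i-z)\,\mathbb{E}[Y_i\mid X_i].
\]
Because \(W\) is \(X\)-measurable, \(\mathbb{E}[Y_iW\mid Z_i]=m_W(Z_i)\gamma(Z_i)\). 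So I would work instead with the sufficient conditioning on \(Z_i\) after integrating out the context, replacing \(Y_iW_i\) by \(m_W(Z_i)\gamma(Z_i)\) in expectation of the numerator. This is the same replacement as in Lemma~\ref{lem:mw_target}.

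\textbf{Step 2: Taylor expand.} Expand \(m_W(u;x^\star)=r_p(u-z)^\top\beta_0+R(u)\), where \(\beta_0\) collects the scaled derivatives and \(|R(u)|\le C\|u-z\|^{p+1}\) near \(z\); this uses \(C^{p+1}\) from Assumption~\ref{ass:data}. The polynomial part is reproduced exactly by weighted least squares. The intercept error is therefore
\[
e_1^\top G^{-1}\sum_i K_c\, r_p\, R(Z_i)
\]
(or its population analogue).

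\textbf{Step 3: bound the remainder.} Substitute \(u=z+Ht\) and let \(D_H\) be the diagonal scaling with \(r_p(Ht)=D_H r_p(t)\) (isotropic case; in general use the standard bound \(\|H\|\)). Then
\[
|H|^{-1}D_H^{-1}\mathbb{E}[G_n]D_H^{-1}\to q(z)M_0 ,
\]
using continuity of \(f_Z\), the Lipschitz property of \(\gamma\) (Assumption~\ref{ass:context}), and finite kernel moments. The limit is nonsingular by Assumptions~\ref{ass:k0} and \ref{ass:positivity}. The scaled remainder vector is \(O(\|H\|^{p+1})\), since \(\int\|t\|^{p+1}|r_p(t)|K_0(t)\,dt<\infty\) by the moment condition up to order \(p+2\). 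Hence the intercept error is \(O(\|H\|^{p+1})\).

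\textbf{Main obstacle.} Passing from the population/conditional statement to the empirical \(G_n\) is the hardest step. I would handle it with a law of large numbers for the scaled Gram matrix. Its variance is \(O((n|H|)^{-1})\to0\), using boundedness of \(W\le\prod\bar K_j\). Combined with the nonsingular limit, this gives \(\|(\text{scaled }G_n)^{-1}\|=O_p(1)\), so the bound on the conditional bias holds with probability tending to one. That is the sense in which \(O(\|H\|^{p+1})\) is claimed.

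The remaining remarks follow directly. Exact context selection gives \(m_W(z;x^\star)=m_{c^\star}(z)\). The variance trade-off is only qualitative, reflecting the smaller \(\gamma\) that enters \(q(z)\).
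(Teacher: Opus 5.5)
Your population-level reading is exactly the paper's argument, with the cited standard bias expansion unpacked, and it is fine. In that reading, Lemma~\ref{lem:weighted_target} turns $\Jcal_{\mathrm{gc-lpr}}$ into standard LPR for $m_W(\cdot;x^\star)$ under the design density $f_Z\gamma_{x^\star}$, and you apply Steps~2--3 to the population minimizer $\beta^\ast_H$.

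The gap is in the empirical reading, which you present as the main step and as ``the sense in which $O(\|H\|^{p+1})$ is claimed.'' Conditioning on $X_{1:n}$ gives $\mathbb{E}[Y_i\mid X_i]=m_{C_i}(Z_i)$. The conditional intercept error is therefore
\[
e_1^\top G_n^{-1}\sum_{i=1}^n K_c(x^\star,X_i)\,r_p(Z_i-z)\,\big\{R(Z_i)+\delta_i\big\},\qquad \delta_i:=m_{C_i}(Z_i)-m_W(Z_i;x^\star),
\]
not the expression in Step~2. Lemma~\ref{lem:mw_target} only gives $\mathbb{E}[W(x^\star,X)\,\delta\mid Z]=0$: $\delta_i$ averages out after integrating $C$ given $Z$, but it does not vanish.

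Your replacement of $Y_iW_i$ by $m_W(Z_i)\gamma_{x^\star}(Z_i)$ ``in the numerator'' is not legitimate, because $G_n$ contains the same $W_i$. Given only $Z_{1:n}$, $G_n^{-1}$ is still random and cannot be pulled out of the expectation. Replacing $W_i$ by $\gamma_{x^\star}(Z_i)$ in the numerator alone breaks the exact polynomial reproduction you use in Step~2. The $\delta$-term is a normalized sum of terms that are centred given $Z$, so it is of exact order $(n|H|)^{-1/2}$ whenever contexts with different $m_c(z)$ receive weight. It is precisely the context-mixing part $(m_C(z)-m_W(z;x^\star))^2$ of $\Phi$ in Proposition~\ref{prop:variance}.

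Here is a concrete counterexample. Take $p=0$, $C\in\{0,1\}$ independent of $Z$ with equal probabilities, $m_1\equiv1$, $m_0\equiv0$, $\varepsilon\equiv0$, and $W=1$ on $\{C=1\}$, $W=w\in(0,1)$ on $\{C=0\}$. All assumptions hold, and $m_W\equiv1/(1+w)$, so $R\equiv0$. Yet given $X_{1:n}$ the estimator is deterministic, with $\hat m-m_W=w(A-B)/\{(1+w)(A+wB)\}$, where $A,B$ are the local kernel masses of the two contexts. This is of order $(n|H|)^{-1/2}$. So no LLN for $G_n$ can deliver an $O_p(\|H\|^{p+1})$ conditional bias given the $X_i$; that holds only under exact selection.

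The repair is not to condition on the contexts. Set $\xi_i:=Y_i-m_W(Z_i;x^\star)$, which absorbs both $\varepsilon_i$ and $\delta_i$, and write $Y_i-r_p(Z_i-z)^\top\beta_0=\xi_i+R(Z_i)$. Then $\hat m(x^\star)-m_W(z;x^\star)$ splits into two pieces:
\begin{itemize}
\item[(i)] $e_1^\top S_n^{-1}(n|H|)^{-1}\sum_i K_0(t_i)W(x^\star,X_i)r_p(t_i)R(Z_i)$, which your Gram-matrix LLN and remainder bound do show is $O_p(\|H\|^{p+1})$;
\item[(ii)] $e_1^\top S_n^{-1}T_n^\xi(x^\star)$, the stochastic term of Proposition~\ref{prop:variance}.
\end{itemize}
Term (ii) is centred only to leading order, since $S_n^{-1}$ is correlated with $W_i\xi_i$. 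The $O(\|H\|^{p+1})$ statement therefore has to be read as a leading-order bias, i.e.\ your $\beta^\ast_H$ reading. That is also the level at which the paper's proof operates: it invokes Lemma~\ref{lem:weighted_target} and cites the standard expansion for the effective problem.

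Two smaller slips:
\begin{itemize}
\item $\int\|t\|^{p+1}|r_p(t)|K_0(t)\,dt$ involves moments of order $2p+1$, which exceeds the assumed $p+2$ once $p\ge2$.
\item Since $K_{0,H}$ already carries $|H|^{-1}$, the correct normalization is $n^{-1}D_H^{-1}\mathbb{E}[G_n]D_H^{-1}\to q_{x^\star}(z)M_0$.
\end{itemize}
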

Proof in~\ref{app:proof_bias}.

\begin{proposition}[Asymptotic variance]
\label{prop:variance}
Under Assumptions \ref{ass:data}--\ref{ass:variance_moment}, fix \(x^\star\in\mathcal{X}\) such that \(z=\psi_0(x^\star)\) is an interior point of the support of \(Z\). Then
\[
\mathrm{Var}\{\hat m(x^\star)\}
=
\frac{1}{n|H|}\,\frac{1}{f_Z(z)}\;
e_1^\top M_0^{-1}\,\Omega_0\,M_0^{-1} e_1\ \times\ \Phi(x^\star)
\;+\;o\!\big((n|H|)^{-1}\big),
\]
where \(M_0\) and \(\Omega_0\) are the kernel moment matrices from Assumption~\ref{ass:k0}, and
\[
\Phi(x^\star)=\frac{\mathbb{E}\big[\,W(x^\star,X)^2\,\big(Y-m_W(Z;x^\star)\big)^2\mid Z=z\,\big]}
{\mathbb{E}\big[\,W(x^\star,X)\mid Z=z\,\big]^2}
\]
is the effective residual factor. Under the model \(Y=m_C(Z)+\varepsilon\) with \(\mathbb{E}[\varepsilon\mid Z,C]=0\),
\[
\Phi(x^\star)=
\frac{\mathbb{E}\big[\,W(x^\star,X)^2\big\{\sigma^2(z,C)+\big(m_C(z)-m_W(z;x^\star)\big)^2\big\}\mid Z=z\,\big]}
{\mathbb{E}\big[\,W(x^\star,X)\mid Z=z\,\big]^2}.
\]
\end{proposition}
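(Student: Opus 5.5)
The plan is to follow the standard linearization argument for local polynomial estimators, carried out for the effective weighted design from Lemma~\ref{lem:weighted_target}. Write $t_i=Z_i-z$, $w_i=W(x^\star,X_i)$, and let $D_H$ be the diagonal matrix with $r_p(H^{-1}t)=D_H^{-1}r_p(t)$ (in the isotropic case the entry for multi-index $\alpha$ is $h^{|\alpha|}$). Rescale by setting
\[
S_n=\tfrac1n\sum_i K_{0,H}(t_i)\,w_i\, r_p(H^{-1}t_i)r_p(H^{-1}t_i)^\top .
\]
Then $e_1^\top\hat\beta=e_1^\top S_n^{-1}\tfrac1n\sum_i K_{0,H}(t_i)w_i r_p(H^{-1}t_i)Y_i$, since $e_1^\top D_H=e_1^\top$. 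Decompose $Y_i=m_W(Z_i;x^\star)+\xi_i$ with $\xi_i:=Y_i-m_W(Z_i;x^\star)$. The key fact, from Lemma~\ref{lem:mw_target}, is $\E[w_i\xi_i\mid Z_i]=0$, so the noise part is conditionally centered given the $Z$-design. I will interpret the variance conditionally on $\{Z_i\}$, which gives the usual asymptotic variance, or unconditionally up to the negligible contribution of the smooth part.

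\textbf{Step 1 (design matrix).} Compute $\E S_n=\int K_0(s)\,r_p(s)r_p(s)^\top f_Z(z+Hs)\gamma_{x^\star}(z+Hs)\,ds$. Continuity of $f_Z$, the Lipschitz property of $\gamma_{x^\star}$ (Assumption~\ref{ass:context}), the moment conditions on $K_0$, and dominated convergence give $\E S_n\to f_Z(z)\phi_1 M_0$. For the entrywise variance, $w_i\le\prod_j\bar K_j$ gives the bound $O((n|H|)^{-1})\to0$. Hence $S_n\to f_Z(z)\phi_1M_0$ in probability. This limit is nonsingular by Assumptions~\ref{ass:k0} and~\ref{ass:positivity}, so $G_n$ is invertible with probability tending to one.

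\textbf{Step 2 (noise term).} Let $T_n=\tfrac1n\sum_i K_{0,H}(t_i)\,w_i\xi_i\, r_p(H^{-1}t_i)$. The summands are i.i.d. with mean zero, so
\[
\Var(T_n)=\tfrac1n\E\big[K_{0,H}(t)^2 r_p(H^{-1}t)r_p(H^{-1}t)^\top \nu_{x^\star}(Z)\big].
\]
Changing variables $u=z+Hs$ gives
\[
\Var(T_n)=\tfrac{1}{n|H|}\int K_0(s)^2 r_p(s)r_p(s)^\top \nu_{x^\star}(z+Hs)f_Z(z+Hs)\,ds .
\]
Local boundedness and continuity of $\nu_{x^\star}$ (Assumption~\ref{ass:variance_moment}) together with continuity of $f_Z$ yield
\[
\Var(T_n)=\tfrac{1}{n|H|}\,\nu_{x^\star}(z)f_Z(z)\,\Omega_0\,(1+o(1)).
\]
Combining with Step 1 through the sandwich
\[
e_1^\top S_n^{-1}T_n\approx e_1^\top (f_Z\phi_1M_0)^{-1}T_n
\]
gives the leading variance
\[
\frac{\nu_{x^\star}(z)f_Z(z)}{n|H|\,f_Z(z)^2\phi_1^2}\;e_1^\top M_0^{-1}\Omega_0M_0^{-1}e_1 ,
\]
which equals the claimed expression with $\Phi=\nu_{x^\star}(z)/\phi_1^2$.

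\textbf{Step 3 (remaining terms).} The smooth part $e_1^\top S_n^{-1}\tfrac1n\sum K_{0,H}w_i r_p\, m_W(Z_i)$ equals $m_W(z)+O_p(\|H\|^{p+1})$ by the Taylor expansion used in Proposition~\ref{prop:bias}. Its fluctuation around its mean has order $\|H\|^{p+1}O_p((n|H|)^{-1/2})$, which is $o((n|H|)^{-1})$ in variance. Cross terms are handled by Cauchy--Schwarz.

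\textbf{Main obstacle.} The delicate step is replacing $S_n^{-1}$ by its limit inside a variance, not merely in probability. I would first argue conditionally on the design: $\Var(e_1^\top\hat\beta\mid \{X_i\})$ equals the quadratic form $e_1^\top S_n^{-1}\hat\Omega_n S_n^{-1}e_1/(n|H|)$ with $\hat\Omega_n\to f_Z\nu\,\Omega_0$ in probability. This gives the result as an asymptotic (conditional) variance, which is the standard meaning in \citep{fan1996local}. An unconditional statement would require a truncation on the event where $S_n$ is near its limit.

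\textbf{Closed form for $\Phi$.} Condition on $(Z,C)$ at $Z=z$. With $Y-m_W=\varepsilon+(m_C(z)-m_W)$ and $\E[\varepsilon\mid Z,C]=0$, the cross term vanishes. Here $W$ is measurable with respect to $X$, and I take $\E[\varepsilon\mid X]=\E[\varepsilon\mid Z,C]$ as part of the decomposition $X\mapsto(Z,C)$. This yields
\[
\sigma^2(z,C)+\big(m_C(z)-m_W(z;x^\star)\big)^2 .
\]
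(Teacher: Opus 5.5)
Your Steps~1--2 reproduce the paper's own proof: the same rescaled $S_n\to f_Z(z)\phi_1(x^\star)M_0$, the same score built from $\xi=Y-m_W(Z;x^\star)$ with the centering $\E[W\xi\mid Z]=0$ from Lemma~\ref{lem:mw_target}, the same change of variables giving $f_Z(z)\nu_{x^\star}(z)\Omega_0$, and the same sandwich. The paper closes with an informal Slutsky step. The genuine problem is the device you propose in your ``main obstacle'' paragraph.

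Conditioning on the full design $\{X_i\}$ freezes the contexts $C_i$. Then $\xi_i$ is no longer conditionally centered, since $\E[\xi_i\mid X_i]=\delta_i:=m_{C_i}(Z_i)-m_W(Z_i;x^\star)$, and the only remaining conditional noise is $\varepsilon_i$, with $\Var(Y_i\mid X_i)=\sigma^2(Z_i,C_i)$. Hence in $\Var(e_1^\top\hat\beta\mid\{X_i\})=e_1^\top S_n^{-1}\hat\Omega_nS_n^{-1}e_1/(n|H|)$ you actually get
\[
\hat\Omega_n=\frac{|H|}{n}\sum_{i=1}^n K_{0,H}(t_i)^2\,w_i^2\,\sigma^2(Z_i,C_i)\,r_p(H^{-1}t_i)\,r_p(H^{-1}t_i)^\top
\;\xrightarrow{P}\;
f_Z(z)\,\E\big[W(x^\star,X)^2\,\sigma^2(z,C)\mid Z=z\big]\,\Omega_0 .
\]
This is not $f_Z(z)\nu_{x^\star}(z)\Omega_0$: it misses exactly the context-mixing part $f_Z(z)\,\E[W^2(m_C(z)-m_W(z;x^\star))^2\mid Z=z]\,\Omega_0$.

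That part lives in $\Var(\E[\hat m(x^\star)\mid\{X_i\}])$. Under $\{X_i\}$-conditioning, the conditional mean is not just your Step~3 smooth part. It also contains $e_1^\top S_n^{-1}\frac1n\sum_i K_{0,H}(t_i)w_i\delta_i\,r_p(H^{-1}t_i)$. Here the sum is a mean-zero i.i.d.\ average, because $\E[W\delta\mid Z]=0$, yet its variance is of order $(n|H|)^{-1}$, so it is not negligible. Concretely, if $\sigma^2\equiv 0$ your conditional variance is identically zero, whereas the stated $\Phi(x^\star)$ is strictly positive whenever contexts with different $m_C(z)$ receive weight; the paper's context-mixing remark says exactly this. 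Carried through, your fix would prove a different, smaller variance.

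The distinction between conditioning on $\{Z_i\}$, as in your opening paragraph, and on $\{X_i\}$ is precisely what matters here. The statement should be read unconditionally, or given $\{Z_i\}$ only, where $\E[w_i\xi_i\mid Z_i]=0$ is the relevant centering. The law of total variance then splits the variance into $\E[\Var(\hat m\mid\{X_i\})]$ and $\Var(\E[\hat m\mid\{X_i\}])$, which reproduce the $\sigma^2$ and $(m_C-m_W)^2$ pieces of the closed form of $\Phi$.

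Given $\{Z_i\}$ alone, $S_n$ is still random, because the $w_i$ depend on $C_i$, so no exact quadratic-form formula is available. Instead proceed as follows. Let $\hat\vartheta$ be the coefficient in the basis $r_p(H^{-1}t)$, let $\vartheta^\ast$ collect the rescaled Taylor coefficients of $m_W(\cdot;x^\star)$ at $z$, and let $B_n$ be the Taylor-remainder term. The $S_n\vartheta^\ast$ contribution cancels exactly, giving $\hat\vartheta-\vartheta^\ast=S_n^{-1}(B_n+T_n)$. Then use $S_n^{-1}T_n=\bar S^{-1}T_n+o_p\big((n|H|)^{-1/2}\big)$ with $\bar S=f_Z(z)\phi_1(x^\star)M_0$. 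This identifies the stated constant as the variance of the leading stochastic term, which is what the paper's Slutsky step delivers. A literal second-moment statement would still need a truncation or moment bound on $S_n^{-1}$, which neither you nor the paper supplies.

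Minor point: for non-diagonal $H$ there is no diagonal $D_H$ with $r_p(H^{-1}t)=D_H^{-1}r_p(t)$. Use an invertible degree-block-diagonal matrix instead; it still leaves the intercept unchanged.
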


Proof in~\ref{app:proof_variance}. The effective residual factor \(\Phi(x^\star)\) quantifies the statistical cost of context isolation and context mixing:
\begin{itemize}
    \item \textbf{Recovery of standard LPR:} If \(W\equiv 1\) (no context weights) and errors are
    homoskedastic and the model reduces to \(Y=m(Z)+\varepsilon\), then \(\Phi(x^\star)=\sigma^2\), recovering the standard LPR variance expression.
    \item \textbf{Context-mixing variation:} Even when \(\sigma^2(z,C)=0\), \(\Phi(x^\star)\) can remain positive if multiple contexts contribute nontrivially to the target \(m_W(\cdot;x^\star)\) at a given \(Z=z\), through the term \(\big(m_C(z)-m_W(z;x^\star)\big)^2\).
    \item \textbf{Weight-induced inflation:} When context weights are sparse or highly variable (e.g., focusing on a small manifold slice), the squared-weight numerator relative to \(\mathbb{E}[W\mid Z=z]^2\) can inflate variance, especially when the effective residual variation is roughly homogeneous or only weakly related to the weights. This reflects the reduction in effective sample size caused by the context filter. GC-LPR therefore does not eliminate the curse of dimensionality; rather, it trades model bias against variance. While the method allows for complex dependencies, a low-variance estimate still requires a sufficiently dense sample in the joint $(Z,C)$ space.
\end{itemize}

\begin{corollary}[Consistency for the context-smoothed target]
\label{cor:mw_consistency}
Under Assumptions \ref{ass:data}--\ref{ass:variance_moment}, fix \(x^\star\) such that \(z=\psi_0(x^\star)\) is an interior point of the support of \(Z\). Then
\[
\mathbb{E}\!\left[\big(\hat m(x^\star)-m_W(z;x^\star)\big)^2\right]
=
O\!\big(\|H\|^{2(p+1)}\big)+O\!\big((n|H|)^{-1}\big).
\]
Consequently, if \(\|H\|\to0\) and \(n|H|\to\infty\), then
\[
\hat m(x^\star)\xrightarrow{L^2} m_W(z;x^\star),
\qquad
\hat m(x^\star)\xrightarrow{P} m_W(z;x^\star).
\]
\end{corollary}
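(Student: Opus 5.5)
The plan is to use the standard bias--variance decomposition of the mean squared error and plug in Propositions~\ref{prop:bias} and~\ref{prop:variance}. Write
\[
\mathbb{E}\!\left[\big(\hat m(x^\star)-m_W(z;x^\star)\big)^2\right]
=\mathrm{Var}\{\hat m(x^\star)\}+\Big(\mathbb{E}\big[\hat m(x^\star)\big]-m_W(z;x^\star)\Big)^2 .
\]
This identity is purely algebraic once $\hat m(x^\star)$ has a finite second moment. The first-term bound in Proposition~\ref{prop:bias} gives $\big(\mathbb{E}[\hat m(x^\star)]-m_W(z;x^\star)\big)^2=O(\|H\|^{2(p+1)})$, since that proposition identifies $m_W(z;x^\star)$ as the induced population target and bounds the polynomial bias by $O(\|H\|^{p+1})$.

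For the variance term, Proposition~\ref{prop:variance} gives the leading term $(n|H|)^{-1}f_Z(z)^{-1}e_1^\top M_0^{-1}\Omega_0M_0^{-1}e_1\,\Phi(x^\star)$ plus $o((n|H|)^{-1})$. It therefore suffices to check that the constant multiplying $(n|H|)^{-1}$ is finite.

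\begin{itemize}
\item $f_Z(z)>0$ by Assumption~\ref{ass:data}.
\item $M_0$ is nonsingular and $\Omega_0$ has finite entries by Assumption~\ref{ass:k0}.
\item $\Phi(x^\star)<\infty$: its numerator is $\nu_{x^\star}(z)$, which is finite by Assumption~\ref{ass:variance_moment}, and its denominator is $\phi_1(x^\star)^2\ge\underline{\phi}^2>0$ by Assumption~\ref{ass:positivity}.
\end{itemize}

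Hence $\mathrm{Var}\{\hat m(x^\star)\}=O((n|H|)^{-1})$. Summing the two bounds gives the displayed MSE rate.

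The convergence statements follow directly. If $\|H\|\to0$ and $n|H|\to\infty$, both terms vanish, so $\hat m(x^\star)\to m_W(z;x^\star)$ in $L^2$. Convergence in probability then follows from Chebyshev/Markov: $P(|\hat m-m_W|>\epsilon)\le \epsilon^{-2}\mathbb{E}[(\hat m-m_W)^2]\to0$.

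The main obstacle is definitional. $\hat m(x^\star)$ is only defined when $G_n(x^\star)$ is invertible, which holds only with probability tending to one, so its unconditional moments may not exist as stated. I would handle this the same way Propositions~\ref{prop:bias} and~\ref{prop:variance} implicitly do. Interpret $\mathbb{E}$ and $\mathrm{Var}$ as the moments to which those propositions apply, for example conditional on the design or on the event $\{G_n(x^\star)\ \text{nonsingular}\}$, or with $\hat m$ set to $0$ (or computed with a Moore--Penrose inverse) off that event.

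With that convention, the $L^2$ bound is exactly the sum of the two earlier results. The step to convergence in probability is unaffected, because the exceptional event has vanishing probability: $P(|\hat m-m_W|>\epsilon)\le P(\text{event}^c)+\epsilon^{-2}\mathbb{E}[(\hat m-m_W)^2\mathbf 1_{\text{event}}]\to0$. So whatever the precise moment convention, the argument needs no new estimates beyond those already established.
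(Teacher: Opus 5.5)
Your proposal is correct and is essentially the paper's proof. The paper also combines the $O(\|H\|^{p+1})$ bias from Proposition~\ref{prop:bias} with the $O((n|H|)^{-1})$ variance from Proposition~\ref{prop:variance} (with Assumptions~\ref{ass:positivity}--\ref{ass:variance_moment} keeping the constant finite) through $\mathbb{E}[(\hat m-m_W)^2]=\Var(\hat m)+(\mathbb{E}[\hat m]-m_W)^2$; your Chebyshev step and your remark about the event where $G_n(x^\star)$ is singular spell out points the paper leaves implicit.
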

\begin{proof}
Proposition~\ref{prop:bias} gives \(\mathbb{E}[\hat m(x^\star)]-m_W(z;x^\star)=O(\|H\|^{p+1})\), while Proposition~\ref{prop:variance} and Assumptions~\ref{ass:positivity}--\ref{ass:variance_moment} give \(\Var\{\hat m(x^\star)\}=O((n|H|)^{-1})\); combine these with \(\mathbb{E}[(\hat m-m_W)^2]=\Var(\hat m)+(\mathbb{E}[\hat m]-m_W)^2\).
\end{proof}

Taken together, the results above give a pointwise-in-\(x^\star\) i.i.d.\ asymptotic theory for the bandwidth-matrix formulation of GC-LPR around the induced target \(m_W(z;x^\star)\). Under \(\|H\|\to0\) and \(n|H|\to\infty\), Corollary~\ref{cor:mw_consistency} yields consistency for \(m_W(z;x^\star)\), but not, by itself, for the context-specific mean \(m_{c^\star}(z)\); the latter would require shrinking context bandwidths and regularity across contexts. The theory also does not cover the adaptive \(k\)-nearest-neighbor implementation used in the experiments below, nor the additional response-based robustness factor used in GRC-LPR. Extending the analysis to those settings is left for future work.

\section{Applications and Experiments}
\label{s:Experiments}
The full experiment code, figure-generation scripts, and manuscript source are available in the GC-LPR repository at \url{https://github.com/yaniv-shulman/gclpr}.
All empirical results use the public Python implementation in the \texttt{rsklpr} package \citep{shulman2026rsklpr}, using the version 2.0 series.
The empirical goal here is to compare GC/GRC-LPR against feature-only local polynomial smoothers and simple generic baselines, rather than to claim a benchmark against all task-specific spatial or graph regression model classes.
\subsection{Experiment 1: Geospatial Context on California Housing}

As a foundational validation of the proposed framework, we test the hypothesis that explicit context kernels improve performance even when contextual variables (coordinates) are already present in the primary feature set. This is intended as a within-family comparison against feature-only local smoothers rather than a full ablation of every geographic representation choice. We evaluate GC-LPR against standard LPR, its robust variant RSKLPR \citep{shulman2025robust}, and a K-Nearest Neighbors (KNN) baseline.

\paragraph{\textbf{Dataset and Setup}}
We utilize the California Housing dataset \citep{pace1997sparse}, modeling the median house value (\texttt{MedHouseVal}) using a feature vector $Z$ containing \texttt{MedInc}, \texttt{AveRooms}, \texttt{Latitude}, and \texttt{Longitude}. Non-geospatial features were standardized. We use a repeated hold-out protocol with 5 repetitions of an \(80\%/20\%\) train/test split and report mean and standard deviation over the held-out test results.

For each repetition, hyperparameters are selected by 4-fold cross-validation on the training split only, and the selected model is then evaluated once on the held-out test split. The hyperparameter search space for local polynomial models included neighborhood size ($k \in [7,151]$), polynomial degree ($p \in \{0,1\}$), Laplacian and Tricube predictor kernels, and Minkowski or Mahalanobis distances. For GC-LPR and GRC-LPR, the search also optimized the length-scale of the Haversine context kernel. Additional evaluation details are collected in~\ref{app:exp_details}. The primary comparison is between:
\begin{itemize}
    \item \textbf{LPR / RSKLPR (Standard):} Utilizes a single features kernel over the full feature vector $Z$. 
    \item \textbf{GC-LPR / GRC-LPR (Geospatial):} Employs the compound kernel $K_c = K_{\text{std}}(Z) \cdot K_{\text{geo}}(\text{Lat, Lon})$, where $K_{\text{geo}}$ is a Haversine-based RBF kernel.
\end{itemize}

\paragraph{\textbf{Results and Statistical Analysis}}
The results are summarized in Table \ref{tab:results_california_5fold}. Both context-aware models outperform the feature-only local smoothers and KNN. GRC-LPR (the robust, context-aware variant) achieves the best mean performance across all three reported metrics, with RMSE $0.4993 \pm 0.0074$, MAE $0.3324 \pm 0.0034$, and $R^2 = 0.8134 \pm 0.0054$. GC-LPR is the second-best model, while the feature-only LPR family remains clearly ahead of KNN. The predictions scatter plot is shown in Figure \ref{figure:california_housling_exp_1_scatter_grid}, and the split-wise RMSE distributions are shown in Figure \ref{figure:california_exp1_rmse_boxplot}.

\begin{figure}[ht]
\centering
{\includegraphics[width=1.0\textwidth]{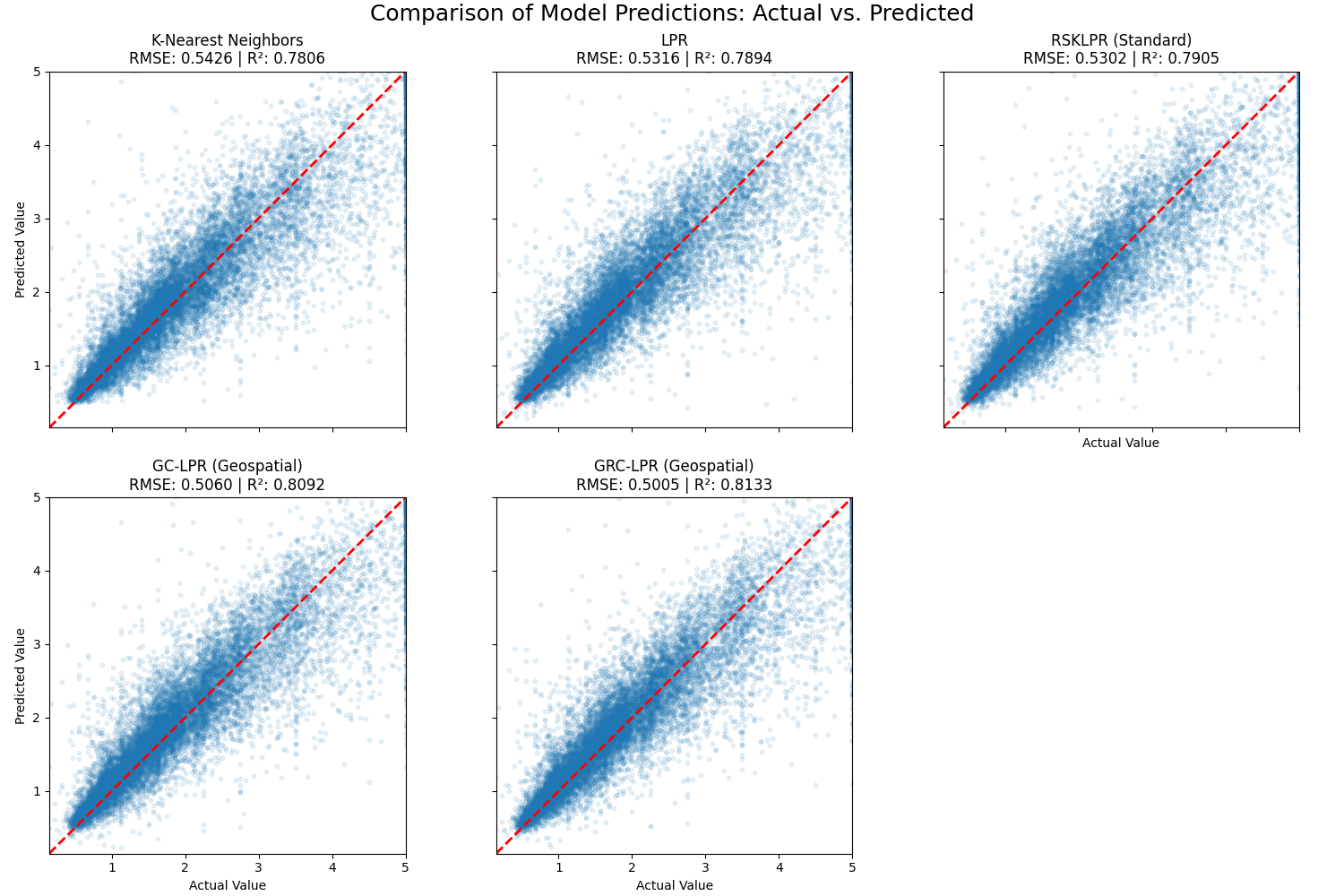}}
\caption{Comparison of model predictions for Experiment 1, aggregating the held-out predictions across the 5 repeated 80/20 train/test splits. The y-axis is the predicted value and x-axis is the actual value. The dashed line represents ideal predictions. The context-aware variants show tighter clustering around the ideal line than the feature-only smoothers.}
\label{figure:california_housling_exp_1_scatter_grid}
\end{figure}

\begin{table}[h]
\centering
\caption{Experiment 1 results on the California Housing dataset under 5 repeated 80/20 hold-out splits with 4-fold cross-validation on each training split. Values are mean \(\pm\) standard deviation over held-out test metrics.}
\label{tab:results_california_5fold}
\small
\begin{tabular}{@{}lrrr@{}}
\toprule
\textbf{Model} & \textbf{RMSE} & \textbf{MAE} & \textbf{$R^2$} \\ \midrule
\textbf{GRC-LPR (Geospatial)} & \textbf{0.4993 \(\pm\) 0.0074} & \textbf{0.3324 \(\pm\) 0.0034} & \textbf{0.8134 \(\pm\) 0.0054} \\
GC-LPR (Geospatial) & 0.5060 \(\pm\) 0.0098 & 0.3398 \(\pm\) 0.0051 & 0.8084 \(\pm\) 0.0067 \\
RSKLPR (Standard) & 0.5267 \(\pm\) 0.0100 & 0.3501 \(\pm\) 0.0055 & 0.7923 \(\pm\) 0.0090 \\
LPR (Standard) & 0.5280 \(\pm\) 0.0066 & 0.3536 \(\pm\) 0.0040 & 0.7914 \(\pm\) 0.0063 \\
KNN & 0.5423 \(\pm\) 0.0117 & 0.3562 \(\pm\) 0.0055 & 0.7798 \(\pm\) 0.0118 \\ \bottomrule
\end{tabular}
\end{table}

\begin{figure}[ht]
\centering
{\includegraphics[width=0.85\textwidth]{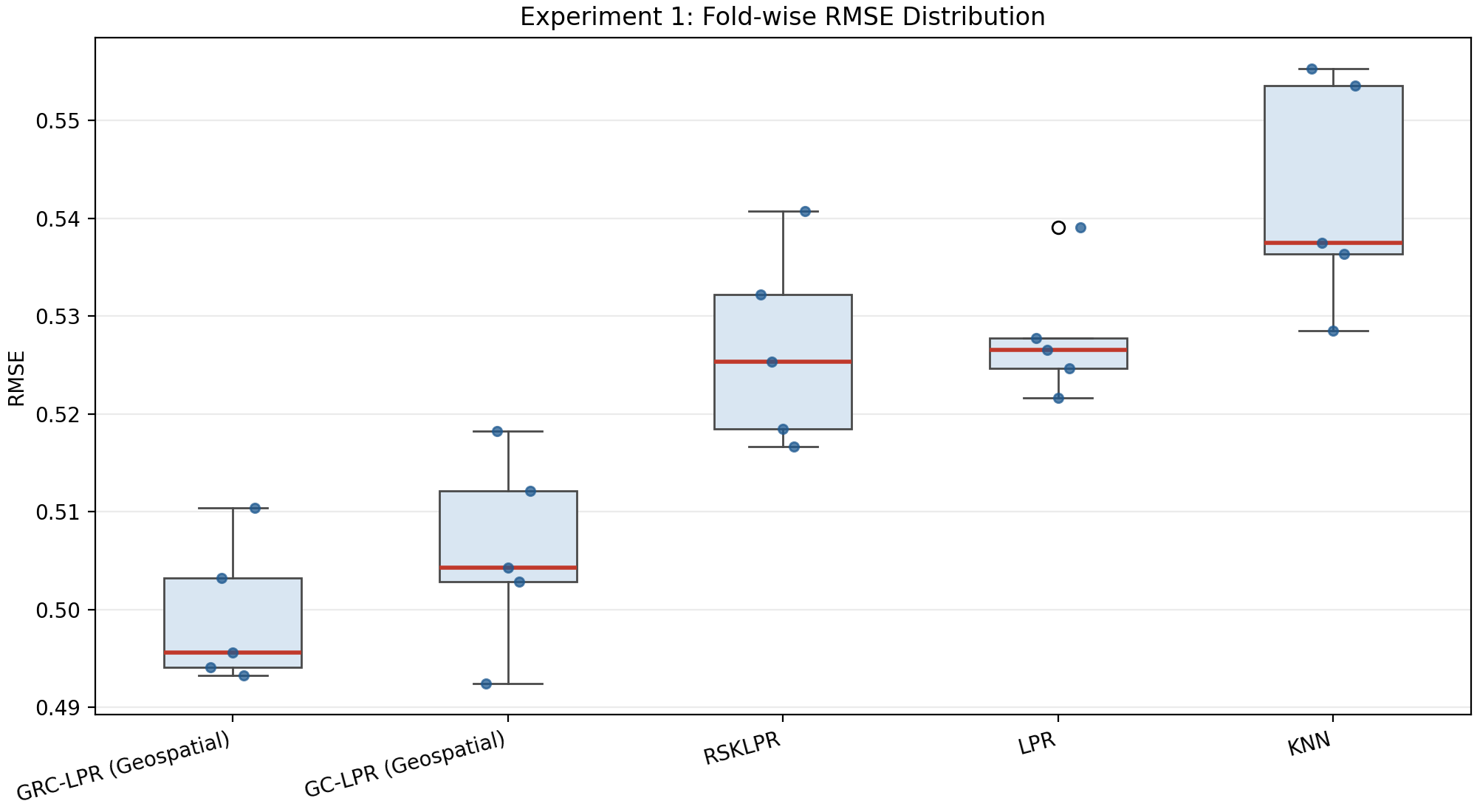}}
\caption{RMSE distributions for Experiment 1 under the repeated 80/20 hold-out protocol. The context-aware variants shift the full RMSE distribution downward, with GRC-LPR achieving the lowest center and the strongest average performance across held-out splits.}
\label{figure:california_exp1_rmse_boxplot}
\end{figure}

The explicit decoupling of geospatial context leads to a marked improvement over standard LPR (RMSE $0.5060$ vs.\ $0.5280$). Incorporating the response-based robustness factor ($K_r$) improves the context-aware model further, reducing RMSE from $0.5060$ to $0.4993$ and MAE from $0.3398$ to $0.3324$. In contrast, the robustness factor alone offers only a modest gain in the feature-only setting (RSKLPR vs.\ LPR).

\paragraph{\textbf{Spatial Error Distribution}}
To investigate the qualitative impact of context-awareness, we mapped the prediction errors across California (Figure \ref{figure:california_geospatial_errors}). The bottom row, corresponding to GC-LPR and GRC-LPR, is visibly more uniform than the feature-only local smoothers in the top row. The context-aware models better adapt to the distinct price dynamics of the San Francisco Bay Area and Greater Los Angeles, whereas the feature-only smoothers still over-smooth across these spatial boundaries.

\begin{figure}[ht]
\centering
{\includegraphics[width=1.0\textwidth]{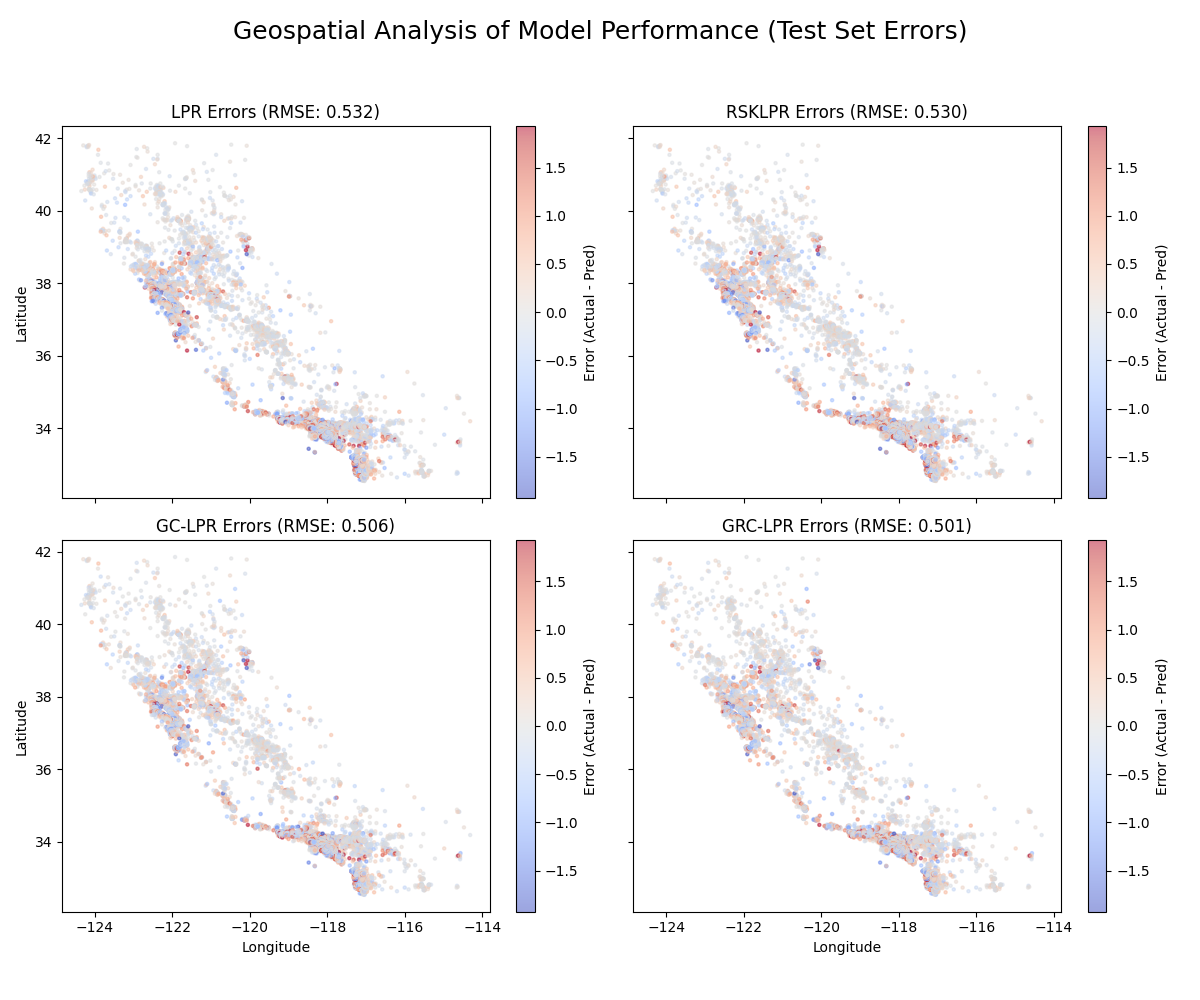}}
\caption{Geospatial distribution of prediction errors across California, aggregating the held-out predictions across the 5 repeated 80/20 train/test splits. The top row shows feature-only local smoothers (LPR and RSKLPR), while the bottom row shows the context-aware variants (GC-LPR and GRC-LPR). The context-aware models exhibit lower and more spatially uniform errors, especially in high-variance coastal urban centers.}
\label{figure:california_geospatial_errors}
\end{figure}

\paragraph{\textbf{Effective Neighborhood Analysis}}
Across the 5 repeated hold-out runs, the selected neighborhoods remained noticeably larger for the context-aware variants than for the feature-only local smoothers: \(k \in \{117,127,137\}\) for GC-LPR and \(k \in \{117,137,147\}\) for GRC-LPR, compared with \(k \in \{57,67\}\) for LPR and \(k \in \{27,37\}\) for RSKLPR. The local-polynomial models uniformly selected \(p=1\). This is consistent with the intended role of the context kernel: it allows the estimator to borrow strength from a wider Euclidean neighborhood while still down-weighting geographically irrelevant observations. The result is a more stable local fit without the same degree of cross-region smoothing bias.

\subsection{Experiment 2: Fusing Tabular, Geospatial, and Network Context in NYC}
The purpose of the second experiment is to demonstrate the use of a more advanced geospatial kernel that takes similar inputs as in Experiment 1 (i.e., \texttt{Latitude} and \texttt{Longitude}), but employs a non-Euclidean kernel on a graph. The goal is to show that standard distance metrics (Euclidean or Haversine) are often insufficient, and that a context-aware model can capture a more meaningful, network-based definition of proximity.

\paragraph{\textbf{Dataset and Setup}}
We use the archived NYC detailed-listings snapshot dated February 13, 2026 from Inside Airbnb \citep{insideairbnb2026nyc} to predict listing \texttt{price}. The features include standard tabular attributes (\texttt{host\_listings\_count}, \texttt{room\_type}, etc.) as well as \texttt{Latitude} and \texttt{Longitude}. The subway context graph is reconstructed from the MTA New York City Transit static GTFS subway feed \citep{mta_gtfs_subway_2026}. We use a repeated hold-out protocol: for each of 5 repetitions, the data are split into \(80\%\) training and \(20\%\) test sets, hyperparameters are selected by 4-fold cross-validation on the training split, and the selected model is then evaluated once on the held-out test split. The local models are fit on \(\log(1+\texttt{price})\), but tuning and reported metrics are based on raw-price RMSE after inverse transformation. For the local smoothers, the neighborhood search is restricted to the stable local range \(k \in \{12,\ldots,20\}\). Table~\ref{tab:results_nyc} reports the mean and standard deviation across the 5 held-out test results. Additional implementation details are collected in~\ref{app:exp_details}.
The primary comparison is between:

\begin{itemize}
    \item \textbf{KNN:} A distance-weighted \(k\)-nearest-neighbor baseline on the same tabular and geospatial predictors.
    \item \textbf{LPR / RSKLPR (Standard):} Feature-only local polynomial smoothers using the standard predictors, with RSKLPR adding the response-based robustness factor.
    \item \textbf{GC-LPR / GRC-LPR (Subway):} Context-aware local smoothers that combine a compound predictor kernel over the standard predictors and the subway network context, with GRC-LPR additionally including the response-based robustness factor:
    $$
    K_c = K_{\text{features}} \times K_{\text{subway}},
    \qquad
    \K_{\D}^* = K_c \times K_r
    $$
    \begin{itemize}
        \item $K_{\text{features}}$: A kernel on the standard tabular features.
        \item $K_{\text{subway}}$: A dedicated context kernel. This kernel maps coordinates to the nearest subway station and computes similarity based on the shortest path distance (number of stops) between stations on the subway network graph. This captures topological proximity rather than simple aerial distance. Figure~\ref{figure:subway_kernel_similarity} illustrates the spatial decay of similarity weights on the subway graph, demonstrating how the kernel captures non-Euclidean connectivity relative to a central station.
        \item $K_r$: The response-based robustness factor from RSKLPR.
    \end{itemize}
\end{itemize}

\paragraph{\textbf{Results}}
The repeated hold-out results in Table \ref{tab:results_nyc} show that the subway-context kernel improves on the feature-only LPR family and also edges out KNN on the primary RMSE metric. GRC-LPR (Subway) achieves the best mean RMSE and mean \(R^2\), with GC-LPR (Subway) a close second, while KNN attains the best mean MAE. The feature-only local smoothers trail the context-aware variants by a substantial margin. Figure~\ref{fig:nyc_exp2_rmse_boxplot} shows the RMSE distributions across the 5 held-out splits and confirms that the subway-context models remain competitive across repeats rather than only on a single favorable split.

\paragraph{\textbf{Analysis of Hyperparameters}}
The selected hyperparameters again favor very local smoothing. Across the 5 repeated hold-out runs, the subway-context models consistently selected \(p=0\), neighborhood sizes between \(k=12\) and \(k=14\), and graph distance scales of \(1.0\) or \(1.5\). LPR likewise remained highly local, usually selecting \(k=12\) or \(k=13\), while KNN selected between \(k=3\) and \(k=17\) depending on the split. Unlike Experiment~1, the context kernel does not encourage substantially larger neighborhoods. Instead, the selected subway-context models remain in the small-\(k\) regime, which is consistent with a highly irregular price surface where the graph context sharpens locality rather than broadening it.

\begin{table}[h]
\centering
\caption{Model performance on the NYC Airbnb data under the repeated 80/20 hold-out protocol. Entries report mean \(\pm\) standard deviation across 5 held-out test splits.}
\label{tab:results_nyc}
\begin{tabular}{@{}lccc@{}}
\toprule
\textbf{Model} & \textbf{RMSE (Price)} & \textbf{MAE (Price)} & \textbf{$R^2$} \\ \midrule
\textbf{GRC-LPR (Subway)} & \textbf{\$1660.35 \(\pm\) 407.59} & \$213.53 \(\pm\) 35.05 & \textbf{0.8631 \(\pm\) 0.0554} \\
GC-LPR (Subway) & \$1693.22 \(\pm\) 417.79 & \$218.58 \(\pm\) 34.94 & 0.8574 \(\pm\) 0.0583 \\
\textbf{KNN} & \$1727.02 \(\pm\) 332.86 & \textbf{\$208.93 \(\pm\) 28.36} & 0.8537 \(\pm\) 0.0482 \\
RSKLPR (Standard) & \$1995.98 \(\pm\) 301.63 & \$240.79 \(\pm\) 30.67 & 0.8056 \(\pm\) 0.0483 \\
LPR (Standard) & \$2029.73 \(\pm\) 323.32 & \$244.34 \(\pm\) 32.62 & 0.7988 \(\pm\) 0.0536 \\ \bottomrule
\end{tabular}
\end{table}

\begin{figure}[ht]
    \centering
    \includegraphics[width=0.72\textwidth]{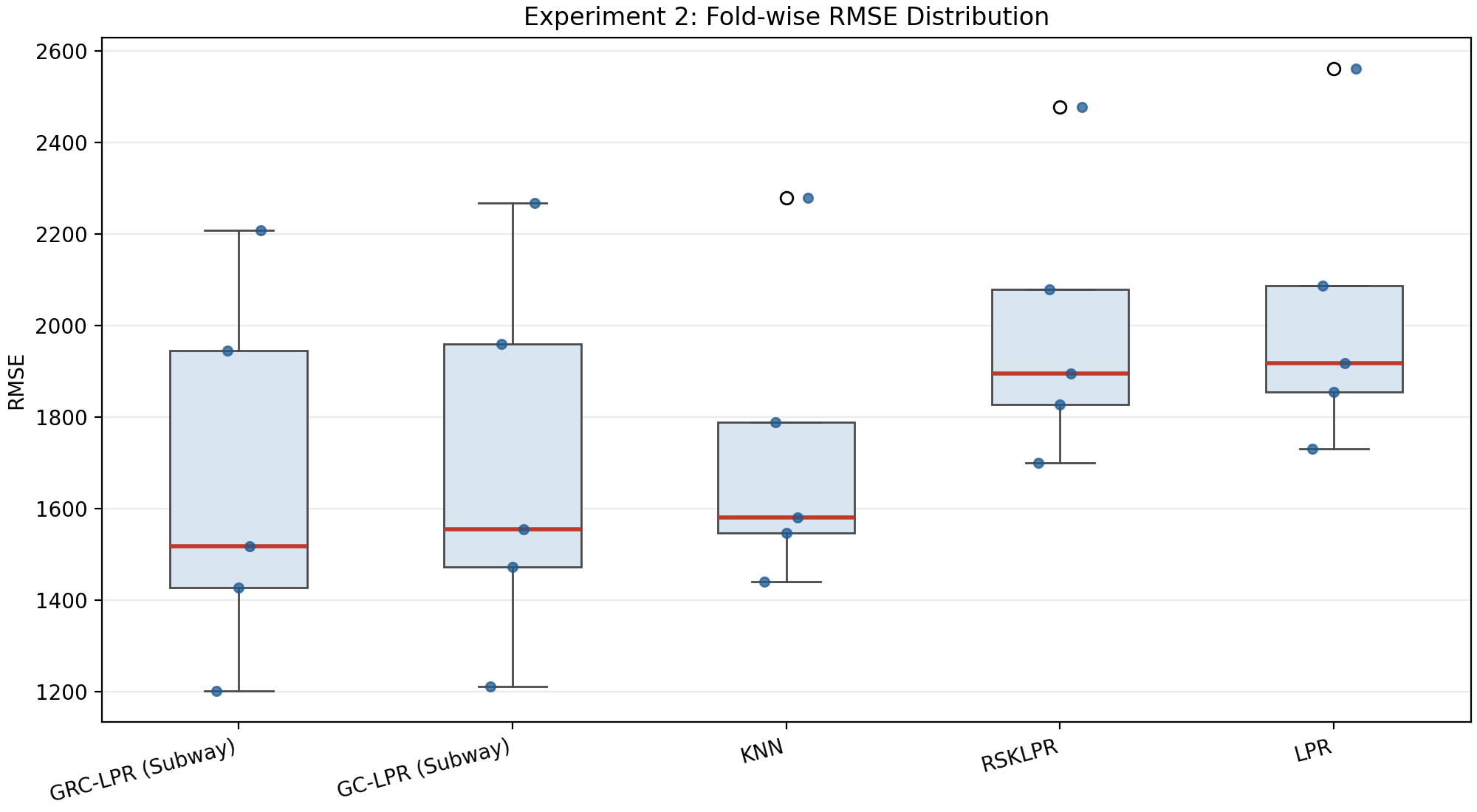}
    \caption{RMSE distributions for Experiment 2 under the repeated 80/20 hold-out protocol. The subway-context models improve on the feature-only LPR variants across repeated held-out test splits, with GRC-LPR (Subway) and GC-LPR (Subway) forming the leading pair.}
    \label{fig:nyc_exp2_rmse_boxplot}
\end{figure}

\begin{figure}[ht]
    \centering
    \begin{subfigure}[b]{0.48\textwidth}
        \centering
        \includegraphics[width=\textwidth]{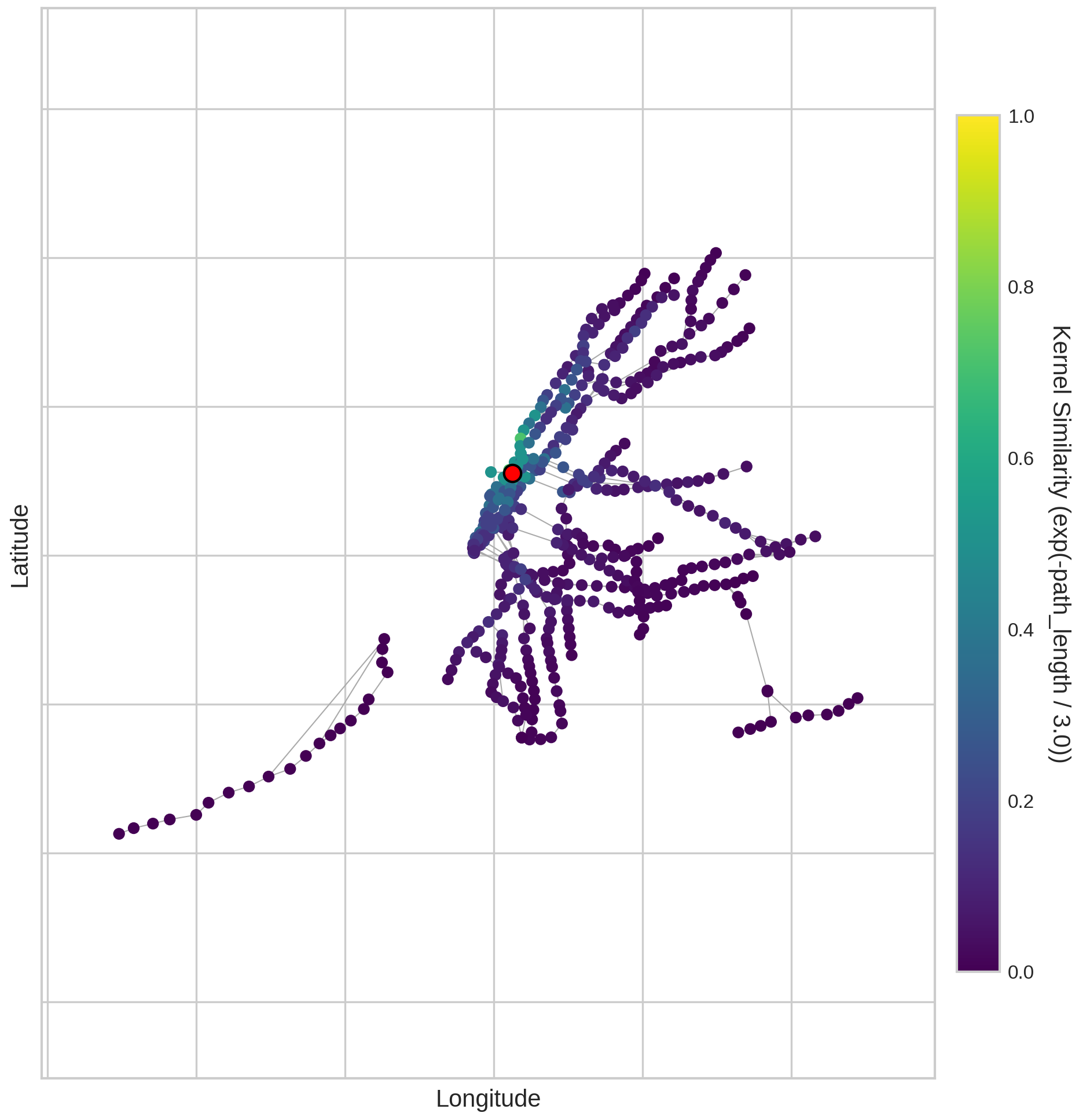}
        \caption{Full subway network view.}
        \label{fig:subway_full}
    \end{subfigure}
    \hfill
    \begin{subfigure}[b]{0.48\textwidth}
        \centering
        \includegraphics[width=\textwidth]{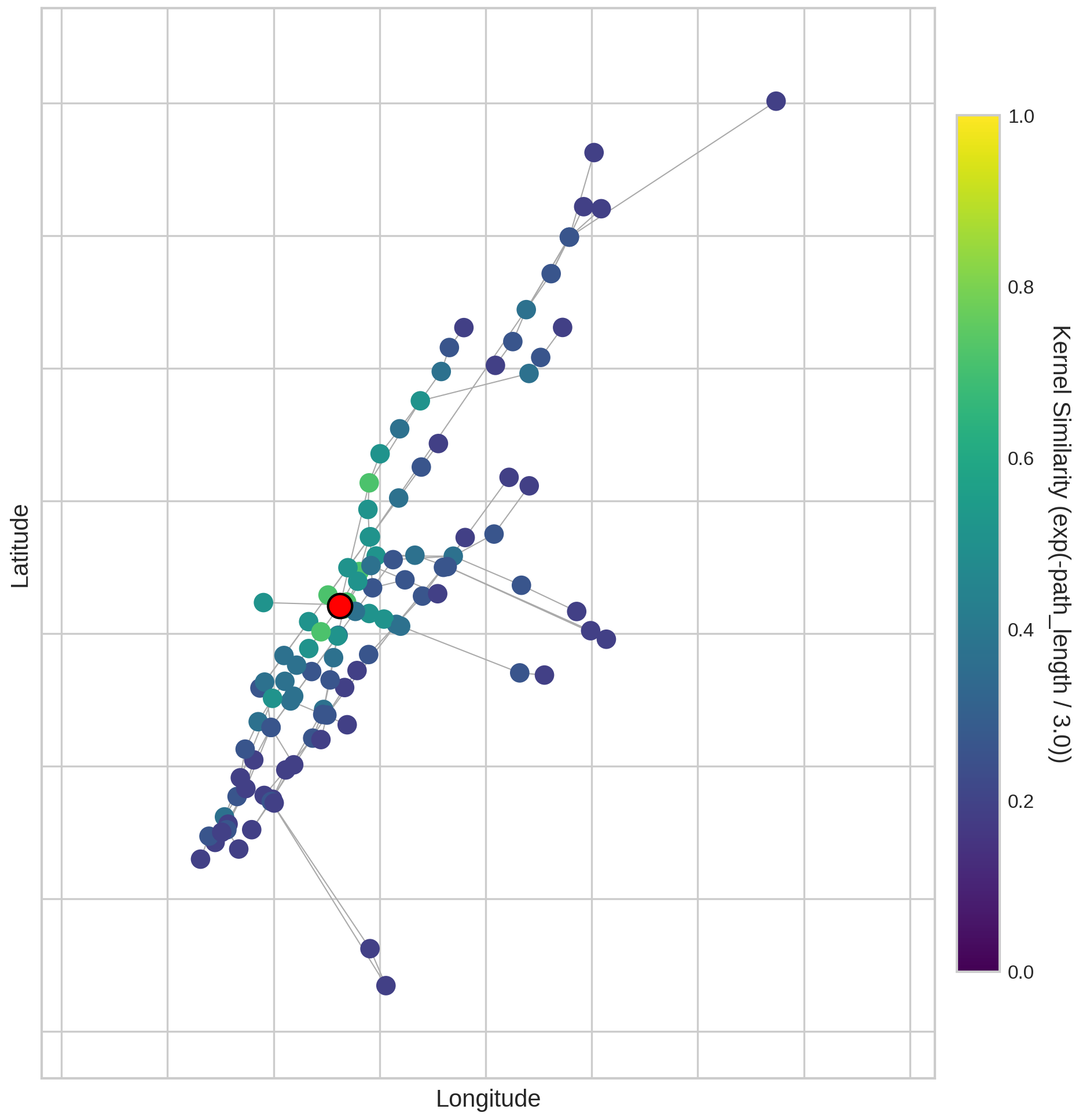}
        \caption{Zoomed-in vicinity of the reference station.}
        \label{fig:subway_zoomed}
    \end{subfigure}
    \caption{Visualizing graph-based kernel similarity for Experiment 2. The heatmaps demonstrate the similarity weights generated by $K_{\text{subway}}$ relative to the reference station ``Times Sq-42 St'' (ID 127), indicated by the large red node with a black outline. Warmer colors represent higher similarity, which decays exponentially based on the shortest path distance on the subway graph network. Panel (a) shows the distribution over the entire network, while (b) highlights the local neighborhood topology.}
    \label{figure:subway_kernel_similarity}
\end{figure}

\subsection{Experiment 3: Regression on a Node Function Defined over a Graph}
\label{s:exp_airport}

This experiment serves as the primary validation of our theoretical framework. While the previous experiments relied on real-world data where the true data generating process is unknown, here we explicitly simulate a target function $Y = m_C(Z)$ defined on the nodes of a graph. This allows us to test the specific hypothesis that \emph{structural locality} (who you are connected to) is distinct from \emph{feature similarity} (how connected you are), and that omitting the former leads to bias.

\paragraph{\textbf{Dataset and Setup}}
We construct a graph $G = (\mathcal{V}, \mathcal{E})$ representing the US domestic flight network from the public \texttt{airports.csv} airport-metadata file and \texttt{flights-airport.csv} route-count file distributed with Vega's airport-connections example \citep{vega_airports_2026,vega_flights_airport_2026}, where the route counts correspond to year 2008.
\begin{itemize}
    \item \textbf{Node Features ($Z$):} For each airport, we compute a feature vector containing standard graph centrality metrics (\texttt{PageRank}, \texttt{Betweenness}, \texttt{Degree}) and geospatial coordinates (\texttt{Latitude}, \texttt{Longitude}). Crucially, these features describe the \emph{role} of a node, but not its unique position in the topology.
    \item \textbf{Generative Process ($Y$):} We generate a synthetic "flight delay" signal that depends on both local features and network propagation. Let \(\widetilde{\mathrm{PR}}_i\), \(\widetilde{\mathrm{BW}}_i\), and \(\widetilde{\mathrm{Deg}}_i\) denote the standardized PageRank, betweenness, and degree features of node \(i\), and let \(\xi_i \stackrel{\mathrm{iid}}{\sim} N(0,1)\). The base signal is
    \begin{align*}
        \tilde{Y}^{(0)}_i &= 0.3\,\widetilde{\mathrm{PR}}_i + 0.5\,\widetilde{\mathrm{BW}}_i + 0.2\,\widetilde{\mathrm{Deg}}_i + 0.1\,\xi_i, \\
        \tilde{Y}^{(k+1)} &= 0.7\,\tilde{Y}^{(0)} + 0.3\,A\tilde{Y}^{(k)} \quad \text{for } k=0,\dots,6,
    \end{align*}
    where \(A\) is the row-normalized traffic-weighted adjacency matrix and the final target is \(Y := \tilde{Y}^{(7)}\), shifted to remain strictly positive. For clarity, the synthetic propagation uses this traffic-weighted normalized adjacency, whereas the GC/GRC-LPR context kernel itself is built on the underlying undirected airport graph via a symmetric shortest-path similarity. Further implementation details are collected in~\ref{app:exp_details}.
\end{itemize}

This process creates a scenario where two airports with similar feature vectors $Z$ (e.g., two medium-sized hubs) may have vastly different $Y$ values if they lie in different diffusion clusters. Standard LPR with a kernel only in $Z$ cannot distinguish them; GC-LPR with graph context can. Figure~\ref{fig:airport_setup} illustrates the underlying network topology and the synthetic delay function used as the prediction target.

\begin{figure}[ht]
    \centering
    \includegraphics[width=0.9\textwidth]{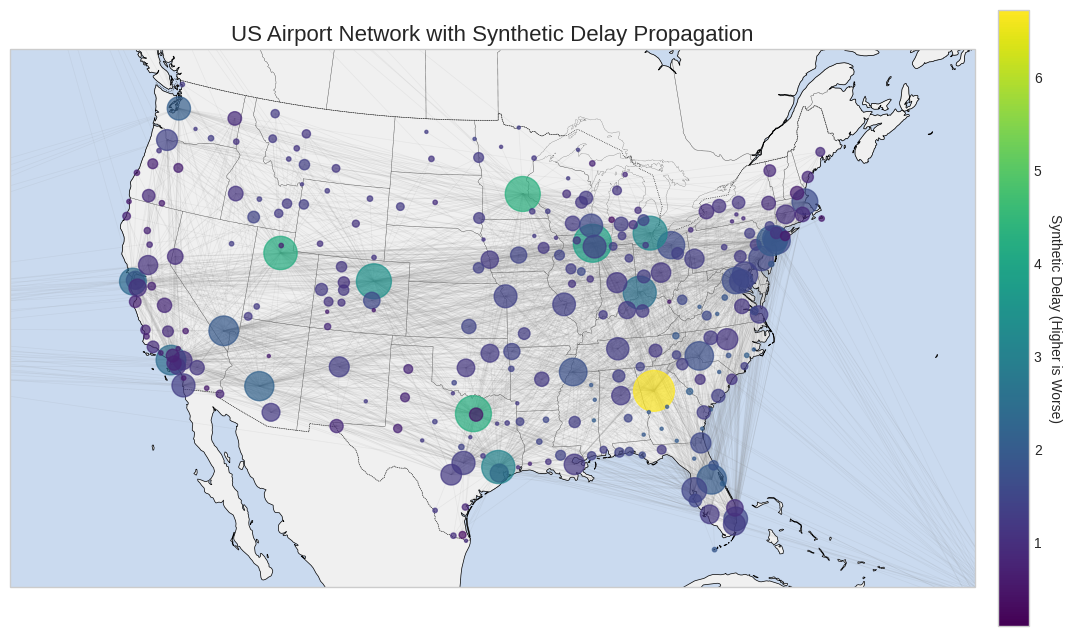}
    \caption{Visualizing the experimental setup for Experiment 3. The map displays the US airport network where edges represent flight routes and node sizes are proportional to degree. The color gradient shows the synthetic ``delay'' target function ($Y$) propagated across the network, which serves as the ground truth for the regression task.}
    \label{fig:airport_setup}
\end{figure}

\paragraph{\textbf{Models}}
We compare the same local-regression family used in the earlier experiments: KNN, LPR, RSKLPR, GC-LPR (Graph), and GRC-LPR (Graph). Experiment~3 uses 5 repeated 80/20 hold-out splits. For each repetition, hyperparameters are selected by 4-fold cross-validation on the training split, and Table~\ref{tab:results_airport} reports the mean and standard deviation of the held-out test metrics across the 5 repetitions.
\begin{itemize}
    \item \textbf{Feature-only baselines:} LPR and RSKLPR use a tricube kernel over $Z$ with the Mahalanobis metric. Across the 5 repeated hold-out runs, the selected neighborhood sizes fall in the range \textbf{79--111} with polynomial degree fixed at $p=1$. KNN remains as a generic local nonparametric baseline.
    \item \textbf{Graph-context models:} Our proposed graph-aware variants utilize the compound predictor kernel
    \[
    K_c(x, X_i) = K_{\text{feat}}(Z_i - Z_x) \times K_{\text{graph}}(C_i, C_x)
    \]
    and, when the response-based robustness factor is used,
    \[
    \K_{\D}^* = K_c \times K_r.
    \]
    Here, $K_{\text{graph}}$ is based on \emph{unweighted} shortest-path distance on the airport graph. Across the repeated hold-out runs, GC/GRC-LPR select neighborhood sizes between \textbf{111} and \textbf{135}, with graph distance scales concentrated between \textbf{0.35} and \textbf{0.40} (and one run at \(0.75\)). As shown in Figure~\ref{fig:airport_kernels}, the graph kernel assigns high weights to airports that are close in hop distance on the route network, thereby defining a non-Euclidean local neighborhood.
\end{itemize}

\begin{figure}[ht]
    \centering
    \begin{subfigure}[b]{0.48\textwidth}
        \centering
        \includegraphics[width=\textwidth]{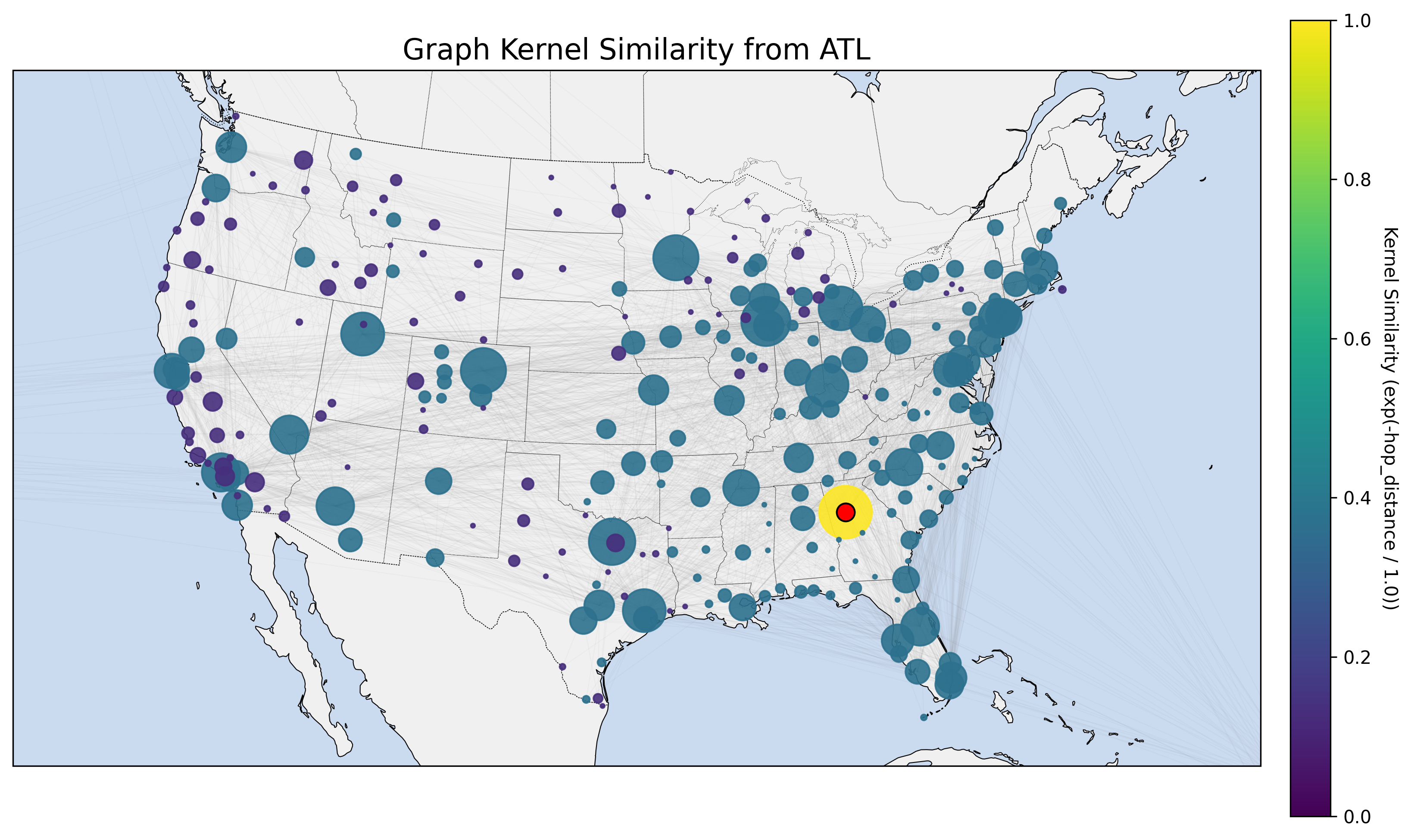}
        \caption{Similarity from Atlanta (ATL).}
        \label{fig:airport_kernel_atl}
    \end{subfigure}
    \hfill
    \begin{subfigure}[b]{0.48\textwidth}
        \centering
        \includegraphics[width=\textwidth]{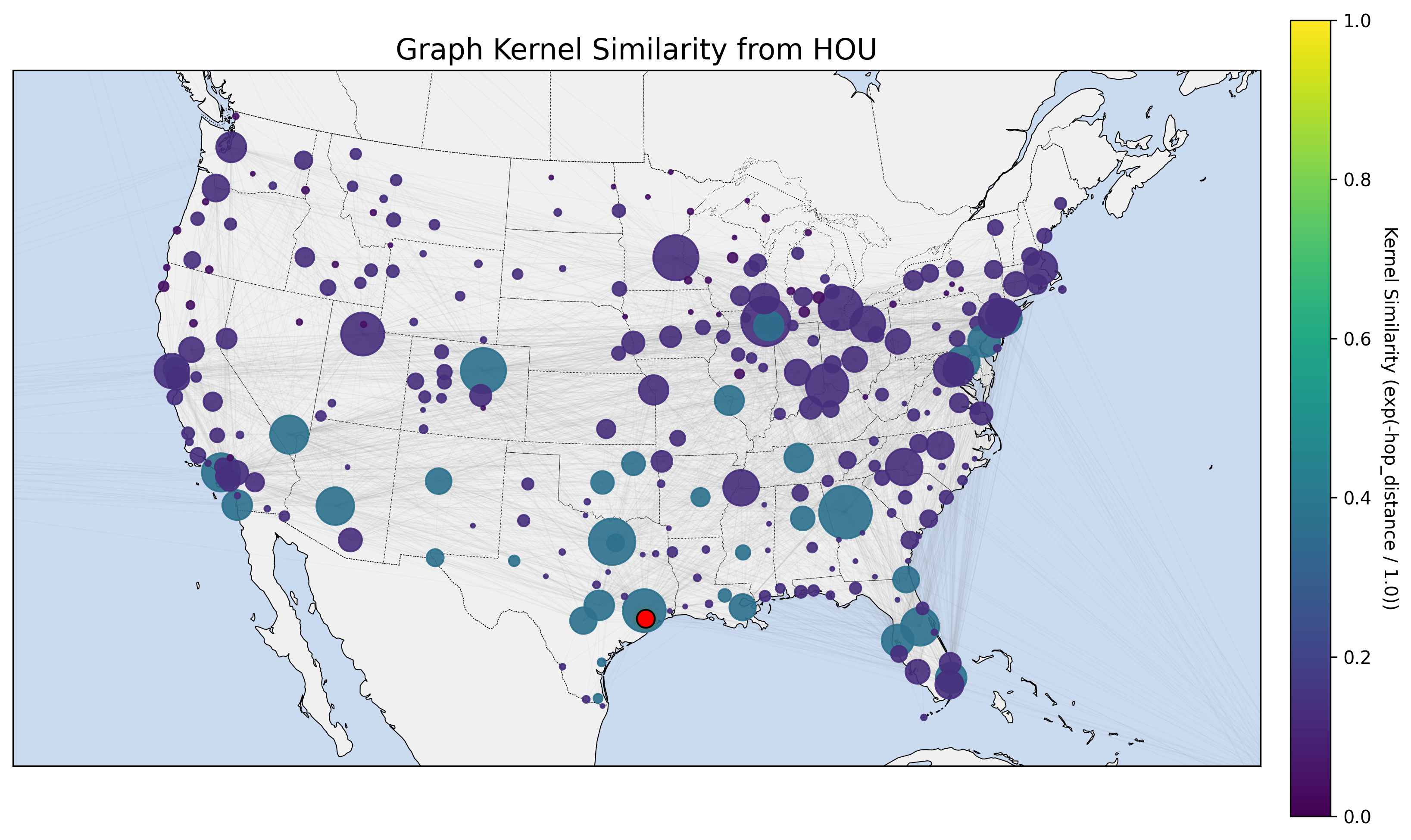}
        \caption{Similarity from Houston (HOU).}
        \label{fig:airport_kernel_hou}
    \end{subfigure}
    \caption{Visualization of the graph kernel \(K_{\text{graph}}\) used in the final Experiment~3 specification. The heatmaps show similarity based on unweighted shortest-path (hop) distance on the route network rather than Euclidean distance. The red nodes indicate the query reference airports (ATL and HOU).}
    \label{fig:airport_kernels}
\end{figure}

\paragraph{\textbf{Results}}
The repeated held-out results in Table~\ref{tab:results_airport} show that the graph-context models consistently improve on feature-only local smoothing. Because the target is generated by graph propagation on the same underlying network later used to define the context kernel, this experiment should be read as a mechanism check for graph-structured dependence rather than as a broad benchmark against all graph regression methods.

\begin{table}[h]
\centering
\caption{Experiment~3 performance under 5 repeated 80/20 hold-out splits with 4-fold cross-validation on each training split. Values are mean \(\pm\) standard deviation over held-out test metrics.}
\label{tab:results_airport}
\begin{tabular}{@{}lrrr@{}}
\toprule
\textbf{Model} & \textbf{RMSE} & \textbf{MAE} & \textbf{$R^2$} \\ \midrule
\textbf{GC-LPR (Graph)} & \textbf{0.2085 $\pm$ 0.0264} & \textbf{0.1529 $\pm$ 0.0169} & \textbf{0.8901 $\pm$ 0.0420} \\
GRC-LPR (Graph) & 0.2087 $\pm$ 0.0267 & 0.1532 $\pm$ 0.0166 & 0.8897 $\pm$ 0.0430 \\
RSKLPR & 0.2124 $\pm$ 0.0217 & 0.1590 $\pm$ 0.0158 & 0.8857 $\pm$ 0.0454 \\
LPR & 0.2127 $\pm$ 0.0218 & 0.1595 $\pm$ 0.0156 & 0.8853 $\pm$ 0.0458 \\
KNN & 0.4326 $\pm$ 0.1072 & 0.2586 $\pm$ 0.0218 & 0.5573 $\pm$ 0.1079 \\ \bottomrule
\end{tabular}
\end{table}

The performance margins are modest but consistent. GC-LPR (Graph) attains the best mean RMSE and mean \(R^2\), while GRC-LPR (Graph) is effectively tied and slightly improves mean MAE only negligibly. Relative to feature-only LPR, the graph kernel lowers mean RMSE from \(0.2127\) to \(0.2085\) and improves mean \(R^2\) from \(0.8853\) to \(0.8901\). The weak gain from the response-based robustness factor indicates that, in this synthetic setting, the main improvement comes from incorporating graph topology into the predictor kernel rather than from additional response reweighting.

\subsection{Experiment 4: Public Graph-Signal Forecasting on the Hungary Chickenpox Benchmark}
\label{s:exp_chickenpox}

To complement the synthetic airport study with an external public benchmark, we next evaluate GC-LPR on the Chickenpox Cases in Hungary benchmark of \citet{rozemberczki2021chickenpox}. This is a fixed graph-signal regression problem: the graph consists of 20 counties connected by 61 undirected edges, and the response is the weekly chickenpox count observed at each county over 521 weeks. After constructing four lagged predictors from the past weekly counts, each county-time pair becomes one regression sample, yielding 10{,}340 node-time observations with
\[
Z = (\texttt{lag\_4}, \texttt{lag\_3}, \texttt{lag\_2}, \texttt{lag\_1})
\]
and target \(Y\) equal to the next week's count.

\paragraph{\textbf{Dataset and Setup}}
This benchmark differs from Experiments~1--3 in that the graph and response are both observed rather than simulated, and the train/test protocol must preserve temporal ordering. Accordingly, Experiment~4 uses a rolling-origin design with 5 outer chronological train/test splits. On each outer split, hyperparameters are selected by 4-fold time-series cross-validation on the training window only, and the selected model is then evaluated on the next contiguous test block. Figures~\ref{fig:chickenpox_scatter}--\ref{fig:chickenpox_boxplot} summarize the held-out predictions, county-level error structure, and fold-wise RMSE distributions.

\paragraph{\textbf{Models}}
We again compare KNN, LPR, RSKLPR, GC-LPR (Graph), and GRC-LPR (Graph). The feature-only models operate only on the lagged-count vector \(Z\) using a tricube feature kernel and Manhattan distance. The graph-aware models use the compound kernel
\[
K_c(x, X_i) = K_{\text{feat}}(Z_i - Z_x) \times K_{\text{graph}}(C_i, C_x),
\]
where \(K_{\text{graph}}\) is an exponentially decaying hop-distance kernel on the county graph. The selected hyperparameters are highly stable across the 5 outer splits: LPR and RSKLPR always choose \textbf{size\_neighborhood \(=121\)}, while GC/GRC-LPR always choose \textbf{size\_neighborhood \(=361\)} and \textbf{distance\_scale \(=16.0\)}. This indicates that, on this benchmark, graph context is useful but acts as a broad structural regularizer rather than as a sharply localized graph slice.

\begin{figure}[ht]
    \centering
    \includegraphics[width=0.95\textwidth]{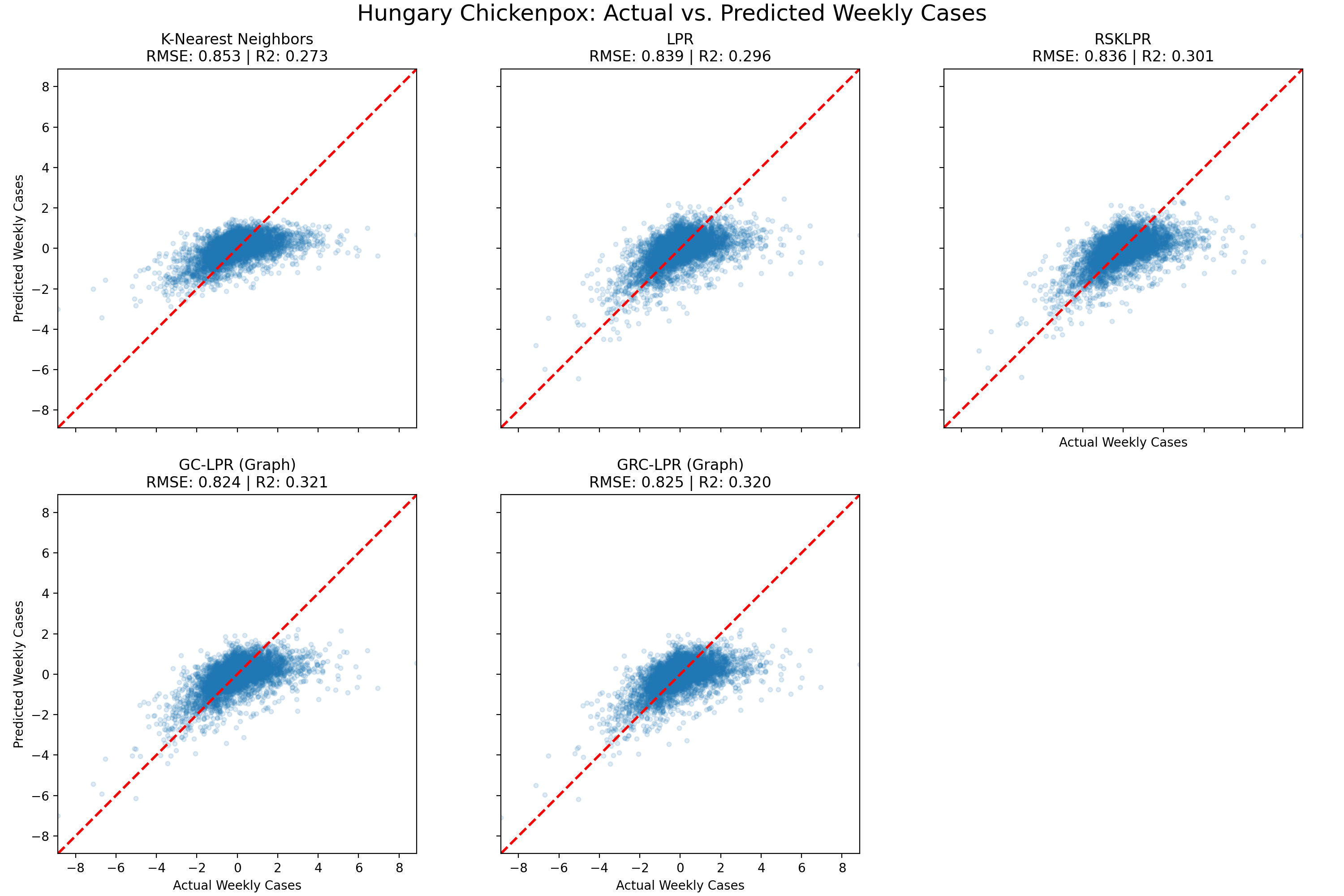}
    \caption{Experiment~4 held-out prediction scatter plots aggregated over the 5 rolling-origin test windows. Each panel compares predicted and observed county-week chickenpox counts for one model.}
    \label{fig:chickenpox_scatter}
\end{figure}

\begin{figure}[ht]
    \centering
    \includegraphics[width=0.95\textwidth]{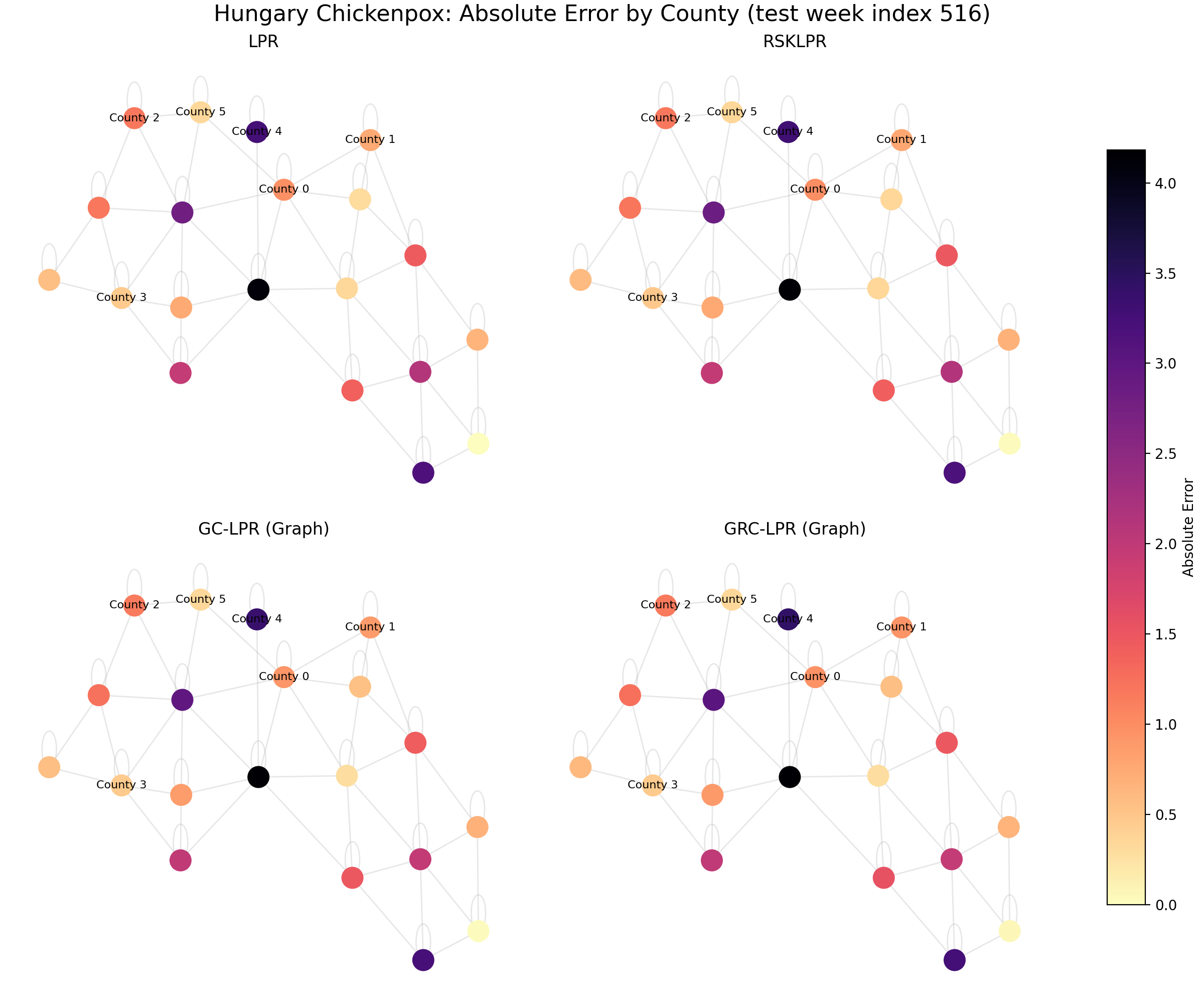}
    \caption{Experiment~4 county-level absolute-error maps on the Hungary chickenpox graph. Colors summarize held-out prediction error aggregated over the rolling-origin evaluation windows.}
    \label{fig:chickenpox_errors}
\end{figure}

\paragraph{\textbf{Results}}
Table~\ref{tab:results_chickenpox} reports the mean and standard deviation of the held-out test metrics across the 5 rolling-origin splits. The gains from graph context are modest but consistent. GC-LPR (Graph) attains the best mean RMSE and mean \(R^2\), while GRC-LPR (Graph) attains the best mean MAE by a very small margin. Relative to feature-only LPR, the hop-based graph kernel lowers mean RMSE from \(0.8347\) to \(0.8201\) and improves mean \(R^2\) from \(0.2961\) to \(0.3205\).

\begin{table}[h]
\centering
\caption{Experiment~4 performance on the public Hungary chickenpox benchmark under 5 rolling-origin splits with 4-fold time-series cross-validation on each training window. Values are mean \(\pm\) standard deviation over held-out test metrics.}
\label{tab:results_chickenpox}
\begin{tabular}{@{}lrrr@{}}
\toprule
\textbf{Model} & \textbf{RMSE} & \textbf{MAE} & \textbf{$R^2$} \\ \midrule
\textbf{GC-LPR (Graph)} & \textbf{0.8201 $\pm$ 0.0906} & 0.5370 $\pm$ 0.0569 & \textbf{0.3205 $\pm$ 0.0407} \\
GRC-LPR (Graph) & 0.8212 $\pm$ 0.0904 & \textbf{0.5367 $\pm$ 0.0572} & 0.3187 $\pm$ 0.0408 \\
RSKLPR & 0.8320 $\pm$ 0.0933 & 0.5458 $\pm$ 0.0604 & 0.3007 $\pm$ 0.0441 \\
LPR & 0.8347 $\pm$ 0.0952 & 0.5490 $\pm$ 0.0613 & 0.2961 $\pm$ 0.0473 \\
KNN & 0.8486 $\pm$ 0.0976 & 0.5557 $\pm$ 0.0608 & 0.2736 $\pm$ 0.0321 \\ \bottomrule
\end{tabular}
\end{table}

\begin{figure}[ht]
    \centering
    \includegraphics[width=0.85\textwidth]{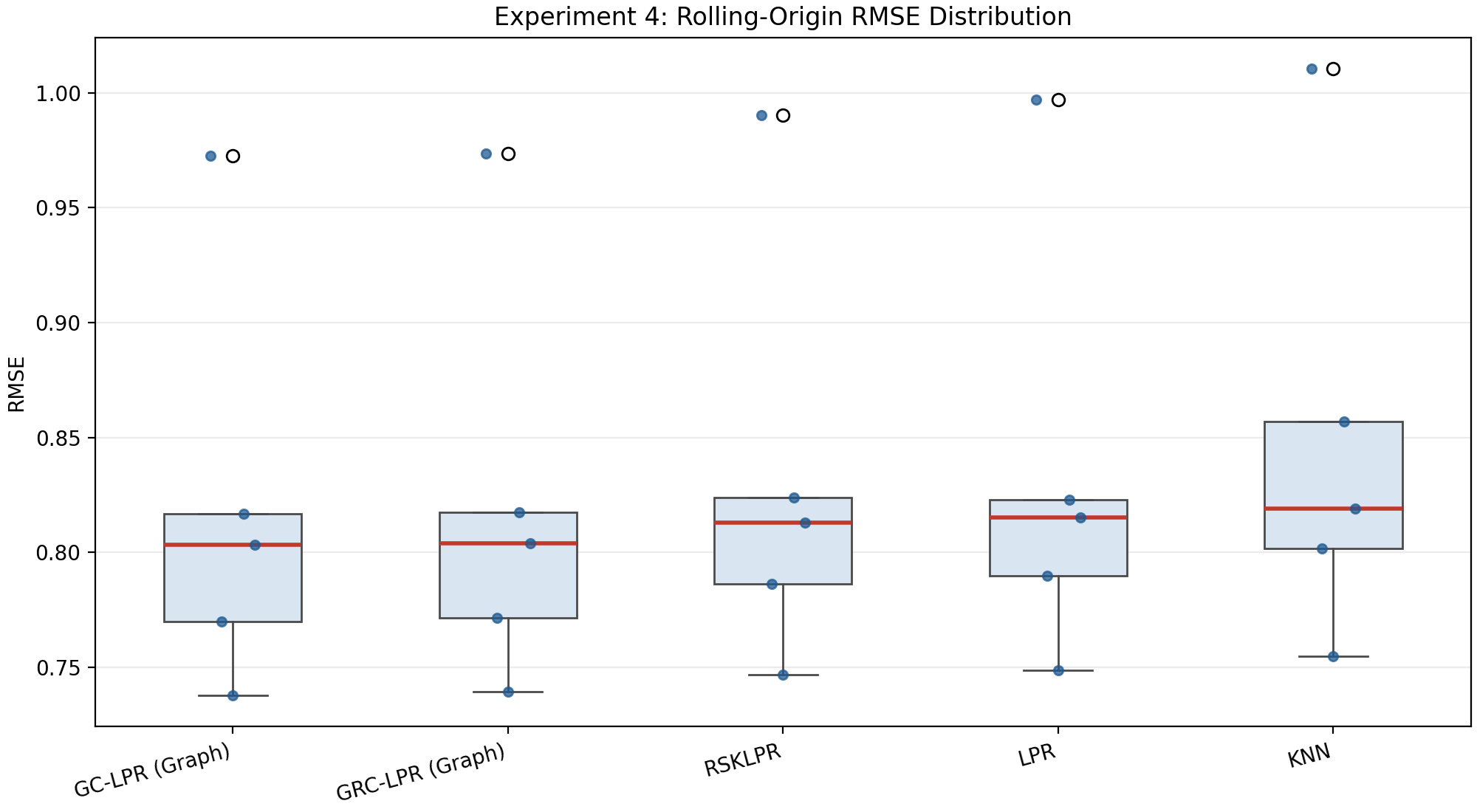}
    \caption{Fold-wise RMSE distribution for Experiment~4 under the rolling-origin protocol. The graph-aware models shift the RMSE distribution downward relative to KNN and feature-only LPR variants.}
    \label{fig:chickenpox_boxplot}
\end{figure}

Taken together, Experiments~3 and~4 show two complementary graph-context regimes. In the synthetic airport task, the graph kernel recovers an explicitly simulated topological signal. On the public chickenpox benchmark, the benefit is smaller but still systematic, which is the more relevant result for external validation. In both cases, the main improvement comes from incorporating graph structure into the predictor kernel; the additional response-based robustness factor provides only marginal gains.

\section{Conclusion}
\label{s:Conclusion}

This work presented Generalized Context-Aware Local Polynomial Regression (GC-LPR), a framework that resolves a fundamental rigidity in classical nonparametric smoothing: the requirement that the variables used to define the local neighborhood must be identical to those used in the polynomial fit. By decomposing the predictor space into a Euclidean fitting chart $\psi_0$ with coordinates $Z=\psi_0(X)$ and a context space ($C$), and bridging them via a compound product kernel, we have established a method that is both mathematically rigorous and practically flexible.

Our theoretical analysis re-frames the role of "context" not merely as a variance modifier, but as a fundamental component of the generative process $Y = m_C(Z)$. In the fixed-context-bandwidth asymptotics developed here, GC-LPR acts as a projected smoother: the context kernels (whether discrete, geospatial, or graph-based) isolate a specific slice of the data manifold, effectively "masking" irrelevant observations before the local polynomial estimation takes place in $Z$. The resulting bias-variance theory is therefore for the induced context-smoothed target $m_W$, not automatically for the point target $m_{c^\star}(z)$. This interpretation justifies the use of kernels that are discontinuous with respect to Euclidean distance, such as those defined on road networks or administrative partitions.

Empirically, we demonstrated that this decoupling yields significant performance gains. In the California Housing experiment, we showed that explicit geospatial kernels improve accuracy even when coordinates are already included as features. In the US Airport Network experiment, we demonstrated that topological contexts (flight paths) capture dependencies that standard Euclidean metrics strictly miss. On the public Hungary chickenpox benchmark, graph-aware local smoothers again improved on feature-only baselines, albeit with smaller margins, providing external validation beyond the synthetic graph experiment.

Ultimately, GC-LPR bridges the gap between interpretable classical statistics and modern geometric learning. It offers a powerful tool for domains where data is tabular but lives on an irregular domain, providing the adaptivity of graph signal processing with the well-understood local-polynomial bias-variance analysis around the induced target $m_W$.

\section*{Acknowledgements}
This research was supported by Fleet Space (\url{https://www.fleet.space}). I gratefully acknowledge my colleagues at Fleet Space for access to infrastructure and for their cooperation, which greatly assisted the research.

\clearpage

\bibliographystyle{plainnat}
\bibliography{refs}

\newpage

\appendix

\section{Proofs}
\label{app:proofs}

In this section, we prove the main population-identification and asymptotic results. The derivations follow the standard local polynomial regression arguments \citep{fan1996local} while explicitly tracking the influence of the compound context weight $W(x, u)$.

\subsection{Notation and Setup}
For simplicity of notation, throughout this appendix we define the chart coordinates used for the polynomial fit as $Z_i := \psi_0(X_i)$ and the query point as $z := \psi_0(x)$. The polynomial basis and the kernel $K_0$ operate on $Z$, while the context weights $W$ depend on the full predictors $X$.

The main text defines the estimator using the unscaled basis $r_p(Z_i-z)$. For asymptotic calculations, it is convenient to use the equivalent scaled basis $r_p(t_i)$, where
\[
t_i := H^{-1}(Z_i-z).
\]
Because composing a polynomial of total degree at most $p$ with the linear map $H$ stays in the same polynomial space, there exists an invertible matrix $A_H$ such that
\[
r_p(Z_i-z)=A_H\,r_p(t_i).
\]
Hence the main-text objective can be reparameterized by a coefficient vector $\vartheta(x)=A_H^\top\beta(x)$, and the intercept is unchanged because the constant basis term is invariant:
\[
\hat m(x)=e_1^\top\hat\beta(x)=e_1^\top\hat\vartheta(x).
\]
The scaled-basis estimator at a query point $x$ is:
\[
\hat{\vartheta}(x) = \argmin_{\vartheta} \sum_{i=1}^n \left( Y_i - r_p(t_i)^\top \vartheta \right)^2 K_0(t_i)\, W(x, X_i),
\]
where $W(x, X_i) = \prod_{j=1}^J K_j(\psi_j(x), \psi_j(X_i))$ collects all context weights. \newline

Let $R$ be the $n \times M$ design matrix with rows $r_p(t_i)^\top$, and let the weight matrix be $W_x = \text{diag}\{K_0(t_i)\, W(x, X_i)\}$. The weighted least squares solution is:
\[
\hat{\vartheta}(x) = (R^\top W_x R)^{-1} R^\top W_x Y.
\]
We define the local moment matrix $S_n$ and the projection vector $T_n$:
\[
S_n = \frac{1}{n|H|} R^\top W_x R, \quad \quad T_n = \frac{1}{n|H|} R^\top W_x Y.
\]
Conditioning on the predictors $\mathbb{X} = \{X_1, \dots, X_n\}$, the expectation is:
\[
\E[\hat{\vartheta}(x) \mid \mathbb{X}] = S_n^{-1} \E[T_n \mid \mathbb{X}] = S_n^{-1} \left( \frac{1}{n|H|} R^\top W_x \mathbf{m} \right),
\]
where $\mathbf{m} = \big(m_{C_1}(Z_1), \dots, m_{C_n}(Z_n)\big)^\top$.

\subsection{Proof of Lemma \ref{lem:mw_target} (Context-smoothed conditional target)}
\label{app:proof_mw_target}
Because Lemma \ref{lem:mw_target} is the population identification step underlying the later bias analysis, we record a slightly expanded derivation.

\begin{proof}
Fix \(x\in\mathcal{X}\) and \(u\in\mathbb{R}^d\), and write
\[
m_W:=m_W(u;x), \qquad \gamma:=\gamma_x(u).
\]
Throughout this proof, \(u\) is fixed, so \(\alpha\) and \(m_W\) are ordinary scalars in the conditional problem given \(Z=u\). Define
\[
L_u(\alpha)\coloneqq \mathbb{E}\!\left[(Y-\alpha)^2\,W(x,X)\mid Z=u\right].
\]
Under the lemma assumptions, \(0<\gamma<\infty\) and \(\mathbb{E}[Y^2W(x,X)\mid Z=u]<\infty\), so \(L_u(\alpha)\) is finite for each \(\alpha\in\mathbb{R}\).

Now decompose
\[
Y-\alpha = (Y-m_W) + (m_W-\alpha).
\]
Therefore,
\[
(Y-\alpha)^2
=
(Y-m_W)^2 + 2(Y-m_W)(m_W-\alpha) + (m_W-\alpha)^2.
\]
Multiplying by \(W(x,X)\) and taking conditional expectations given \(Z=u\), we obtain
\begin{align*}
L_u(\alpha)
&= \mathbb{E}\!\left[(Y-m_W)^2\,W(x,X)\mid Z=u\right] \\
&\quad + 2(m_W-\alpha)\,\mathbb{E}\!\left[(Y-m_W)\,W(x,X)\mid Z=u\right] \\
&\quad + (m_W-\alpha)^2\,\mathbb{E}\!\left[W(x,X)\mid Z=u\right].
\end{align*}
The cross-term vanishes because
\[
\mathbb{E}\!\left[(Y-m_W)\,W(x,X)\mid Z=u\right]
=
\mathbb{E}\!\left[Y\,W(x,X)\mid Z=u\right]
- m_W\,\mathbb{E}\!\left[W(x,X)\mid Z=u\right]
= 0,
\]
by the definition \(m_W=\mathbb{E}[Y\,W(x,X)\mid Z=u]/\gamma\) and \(\gamma=\mathbb{E}[W(x,X)\mid Z=u]\). Hence
\[
L_u(\alpha)
=
\mathbb{E}\!\left[(Y-m_W)^2\,W(x,X)\mid Z=u\right]
+ \gamma\,(\alpha-m_W)^2.
\]
Since \(\gamma>0\), the right-hand side is uniquely minimized at \(\alpha=m_W\), proving the scalar argmin statement.

For the function version, let \(a:\mathbb{R}^d\to\mathbb{R}\) be measurable. Under the conditional law given \(Z=u\), one has \(a(Z)=a(u)\) almost surely. Applying the scalar identity with \(\alpha=a(u)\) yields
\[
\mathbb{E}\!\left[(Y-a(Z))^2\,W(x,X)\mid Z=u\right]
=
\mathbb{E}\!\left[(Y-m_W(u;x))^2\,W(x,X)\mid Z=u\right]
+ \gamma_x(u)\,\big(a(u)-m_W(u;x)\big)^2.
\]
This proves Lemma~\ref{lem:mw_target}.
\end{proof}

\subsection{Proof of Proposition \ref{prop:bias} (Bias decomposition and rate)}
\label{app:proof_bias}
Throughout the proof we use the notation of Appendix~A.1.

\begin{proof}
Fix a query point $x^\star$ and write $z:=\psi_0(x^\star)$.
By Lemma~\ref{lem:mw_target}, the context-smoothed mean \(m_W(u;x^\star)\) is, for each \(u\), the unique conditional weighted least-squares target induced by the context weight \(W(x^\star,X)\). By Lemma~\ref{lem:weighted_target}, the population GC-LPR objective at \(x^\star\),
\[
\Jcal_{\mathrm{gc-lpr}}(x^\star;\beta),
\]
is, up to an additive constant independent of \(\beta\), exactly the standard local polynomial objective for the target \(m_W(\cdot;x^\star)\) under the effective design density
\[
u\mapsto f_Z(u)\,\gamma_{x^\star}(u).
\]

Under Assumptions~\ref{ass:data}--\ref{ass:positivity}, this induced problem satisfies the usual local regularity conditions near \(z\): \(m_W(\cdot;x^\star)\in C^{p+1}\) by Assumption~\ref{ass:data}, \(f_Z\) is continuous with \(f_Z(z)>0\), \(\gamma_{x^\star}(z)=\phi_1(x^\star)>0\) by Assumption~\ref{ass:positivity}, and \(\gamma_{x^\star}\) is locally Lipschitz near \(z\) by Assumption~\ref{ass:context}. Therefore the standard local polynomial bias expansion applies to the intercept estimator for this effective problem (see, e.g., \cite{fan1996local,loader1999local}), yielding
\[
\mathbb{E}\big[\hat m(x^\star)\big]-m_W(z;x^\star)
=O(\|H\|^{p+1}).
\]

Now add and subtract \(m_W(z;x^\star)\):
\[
\mathbb{E}\big[\hat m(x^\star)\big]-m_{c^\star}(z)
=
\Big(\mathbb{E}\big[\hat m(x^\star)\big]-m_W(z;x^\star)\Big)
+
\Big(m_W(z;x^\star)-m_{c^\star}(z)\Big).
\]
The first term is the polynomial approximation bias in the \(Z\)-coordinates and is \(O(\|H\|^{p+1})\); the second is the context-smoothing gap. Hence
\[
\mathbb{E}\big[\hat m(x^\star)\big]-m_{c^\star}(z)
=
O(\|H\|^{p+1})+\Big(m_W(z;x^\star)-m_{c^\star}(z)\Big).
\]

If the context kernels enforce exact selection of the context cell containing \(c^\star\), then \(m_W(z;x^\star)=m_{c^\star}(z)\) and the smoothing term vanishes. In the continuous-context case, shrinking the context bandwidths sharpens the selection and can reduce \(\big|m_W(z;x^\star)-m_{c^\star}(z)\big|\), at the cost of increased variance through a smaller effective sample size. This proves Proposition~\ref{prop:bias}.
\end{proof}

\subsection{Proof of Proposition \ref{prop:variance} (Asymptotic variance)}
\label{app:proof_variance}
\begin{proof}
Fix \(x^\star\in\mathcal{X}\) and write \(z=\psi_0(x^\star)\). By Lemma~\ref{lem:weighted_target}, the variance analysis is that of standard local polynomial regression for the target \(m_W(\cdot;x^\star)\) under the effective design density
\[
q_{x^\star}(u)\coloneqq f_Z(u)\gamma_{x^\star}(u).
\]
Define the effective residual and its weighted conditional second moment:
\[
\xi_{x^\star} \coloneqq Y-m_W(Z;x^\star),
\qquad
\nu_{x^\star}(u)\coloneqq \mathbb{E}\!\left[W(x^\star,X)^2\,\xi_{x^\star}^2\mid Z=u\right].
\]
By Lemma~\ref{lem:mw_target},
\[
\mathbb{E}\!\left[\xi_{x^\star}\,W(x^\star,X)\mid Z=u\right]=0,
\]
and Assumptions~\ref{ass:data}--\ref{ass:variance_moment} imply that, in a neighborhood of \(z\), the effective design density \(q_{x^\star}\) is continuous and bounded away from zero, while \(\nu_{x^\star}\) is locally bounded and continuous at \(z\).

Write \(\xi_{i,x^\star}\coloneqq Y_i-m_W(Z_i;x^\star)\) and decompose the score term as
\[
T_n^\xi(x^\star)\coloneqq \frac{1}{n|H|}\sum_{i=1}^n K_0(t_i)\,W(x^\star,X_i)\,r_p(t_i)\,\xi_{i,x^\star}.
\]
By the same change-of-variables and dominated-convergence arguments as in standard LPR (using the continuity of \(q_{x^\star}\) and \(\nu_{x^\star}\), together with the moment conditions in Assumption~\ref{ass:k0}), the local moment matrix from Appendix~A.1 satisfies
\[
S_n \xrightarrow{P} q_{x^\star}(z)\,M_0,
\]
and independence of \(\{(X_i,Y_i)\}_{i=1}^n\) gives
\[
\Var\!\big(T_n^\xi(x^\star)\big)
=
\frac{1}{n|H|}\,f_Z(z)\,\nu_{x^\star}(z)\,\Omega_0
\;+\;o\!\big((n|H|)^{-1}\big).
\]
Since the centered stochastic part of \(\hat\vartheta(x^\star)\) is \(S_n^{-1}T_n^\xi(x^\star)\), Slutsky's theorem yields
\[
\Var(\hat{\vartheta}(x^\star))
=
\frac{1}{n|H|}\,
\frac{\nu_{x^\star}(z)}{f_Z(z)\gamma_{x^\star}(z)^2}\,
M_0^{-1}\Omega_0M_0^{-1}
 \;+\; o\!\big((n|H|)^{-1}\big).
\]
Taking the intercept component gives
\[
\Var\{\hat m(x^\star)\}
=
\frac{1}{n|H|}\,
\frac{\nu_{x^\star}(z)}{f_Z(z)\gamma_{x^\star}(z)^2}\,
e_1^\top M_0^{-1}\Omega_0M_0^{-1}e_1
\;+\; o\!\big((n|H|)^{-1}\big).
\]
Since \(\gamma_{x^\star}(z)=\mathbb{E}[W(x^\star,X)\mid Z=z]\), this is exactly the displayed formula with
\[
\Phi(x^\star)
=
\frac{\mathbb{E}\!\left[W(x^\star,X)^2\,\big(Y-m_W(Z;x^\star)\big)^2\mid Z=z\right]}
{\mathbb{E}[W(x^\star,X)\mid Z=z]^2}.
\]
Under the model \(Y=m_C(Z)+\varepsilon\), conditioning on \(Z=z\) gives
\[
\xi_{x^\star} = \big(m_C(z)-m_W(z;x^\star)\big)+\varepsilon,
\]
so
\[
\mathbb{E}\!\left[W(x^\star,X)^2\,\xi_{x^\star}^2\mid Z=z\right]
=
\mathbb{E}\!\left[W(x^\star,X)^2\big\{\sigma^2(z,C)+\big(m_C(z)-m_W(z;x^\star)\big)^2\big\}\mid Z=z\right],
\]
because the cross term vanishes after conditioning on \((Z,C)\) and using \(\mathbb{E}[\varepsilon\mid Z,C]=0\). This gives the stated form of \(\Phi(x^\star)\).

\end{proof}

\section{Experimental Implementation Details}
\label{app:exp_details}

\subsection{Evaluation Protocols}
Experiment 1 uses the same repeated hold-out design as Experiments~2 and~3: 5 repetitions of an \(80\%/20\%\) train/test split, with hyperparameters selected by 4-fold cross-validation on each training split and reported metrics aggregated over the held-out test sets. Experiment 2 also fits the local models on \(\log(1+\texttt{price})\), but model selection and reported metrics are based on raw-price RMSE after inverse transformation of the predictions. Experiment 4 uses a rolling-origin evaluation protocol suited to graph-temporal forecasting: the county-week samples are ordered chronologically, 5 outer train/test windows are formed without shuffling, and hyperparameters are selected by 4-fold time-series cross-validation inside each training window before a single held-out evaluation on the next contiguous time block.

\subsection{Public Data Sources for Experiments 2--4}
Experiment 2 uses the archived NYC detailed listings snapshot dated February 13, 2026 from Inside Airbnb \citep{insideairbnb2026nyc}; the subway context graph is reconstructed from the MTA New York City Transit static GTFS subway feed \citep{mta_gtfs_subway_2026}. Experiment 3 uses the public \texttt{airports.csv} airport-metadata file and \texttt{flights-airport.csv} route-count file distributed with Vega's airport tutorial and datasets repository \citep{vega_airports_2026,vega_flights_airport_2026}. Experiment 4 uses the Chickenpox Cases in Hungary benchmark introduced by \citet{rozemberczki2021chickenpox}; the implementation reads the public JSON mirror distributed with PyTorch Geometric Temporal \citep{pyg_temporal_chickenpox_2026}.

\subsection{Synthetic Target Construction in Experiment 3}
The airport graph is constructed as an undirected graph on direct routes. For the synthetic target, only the PageRank, betweenness, and degree features enter the base signal, after standardization; the latitude and longitude variables are included in the regression features but not in the base-delay formula. The propagation operator uses the traffic-weighted adjacency matrix, row-normalized so each row averages the delays of neighboring airports. The update is iterated for seven steps as displayed in Section~\ref{s:exp_airport}, and the final simulated delay is shifted so that the response remains strictly positive. The graph context kernel used by GC/GRC-LPR is separate from this propagation step: it is computed from \emph{unweighted} shortest-path (hop) distance on the same undirected graph, so the context kernel captures topological locality while the synthetic target still reflects traffic-weighted diffusion.

\subsection{Public Benchmark Construction in Experiment 4}
The Hungary chickenpox benchmark is treated as a fixed graph-signal regression problem. The public dataset contains weekly chickenpox counts for 20 counties over 521 weeks on a county graph with 61 undirected edges \citep{rozemberczki2021chickenpox}. For Experiment~4, each node-time pair is flattened into one supervised sample with four lagged weekly counts as predictors and the next week's count as the response, yielding 10{,}340 samples in total. The raw JSON payload used by the implementation is the public mirror distributed with PyTorch Geometric Temporal \citep{pyg_temporal_chickenpox_2026}. The graph-aware variants use the same tricube feature kernel as the feature-only models, multiplied by an exponentially decaying hop-distance kernel on the county graph.
\end{document}